\documentclass[11pt]{article}
\usepackage{graphicx,hyperref,color}
\usepackage[utf8]{inputenc}
\usepackage{marvosym}
\usepackage{cite}
\usepackage{verbatim}
\usepackage{amsmath,amssymb,amsthm}
\usepackage{amsmath,amssymb}
\usepackage{chngcntr}
\usepackage[stable]{footmisc}
\usepackage{float}
\usepackage{url}
\usepackage{xspace}
\usepackage{paralist}
\usepackage{enumerate}
\usepackage{thmtools}
\usepackage{thm-restate}
\usepackage{hyperref}
\usepackage{cleveref}
\usepackage{fullpage}

\definecolor{orange}{rgb}{1,0.5,0}

\DeclareMathOperator{\mst}{\textsc{MST}}
\DeclareMathOperator{\MST}{\mathrm{MST}}

\newtheorem{theorem}{Theorem}

\newtheorem{lemma}[theorem]{Lemma}
\newtheorem{claim}[theorem]{Claim}

\newtheorem{definition}[theorem]{Definition}
\newtheorem{observation}[theorem]{Observation}

\newcounter{fillctr}
\setcounter{fillctr}{0}

\newcommand{\dm}{\mathrm{diam}}
\newcommand{\edm}{\mathrm{ediam}}

\newcommand{\ce}{c(\epsilon)}
\newcommand{\ed}{\epsilon^{-O(d)}}
\newcommand{\dk}{\Delta_{\mathcal{K}}}

\newcommand*\trunc[2]{\lfloor {#1} \rfloor_{#2}}
\newcommand*\rfrac[2]{{}^{#1}\!/_{#2}}

\date{}

\title{Greedy spanners are optimal in doubling metrics}
\author{Glencora Borradaile \and Hung Le \and Christian Wulff-Nilsen}
\begin{document}
\maketitle
\begin{abstract}
We show that the greedy spanner algorithm constructs a $(1+\epsilon)$-spanner of weight $\epsilon^{-O(d)}w(\mst)$ for a point set in metrics of doubling dimension $d$, resolving an open problem posed by Gottlieb~\cite{Gottlieb15}. Our result generalizes the result by Narasimhan and Smid~\cite{NS07} who showed that a point set in $d$-dimension Euclidean space has a $(1+\epsilon)$-spanner of weight at most $\epsilon^{-O(d)}w(\mst)$. Our proof only uses  the packing property of doubling metrics and thus implies a much simpler proof for the same result in Euclidean space. 
\end{abstract}

\section{Introduction}

For a real value $t\ge 1$, a $t$-spanner of an edge-weighted graph $G$ is a subgraph $S$ such that $d_G(x,y)\leq d_S(x,y) \leq t\cdot d_G(x,y)$ where $d_G$ and $d_S$ denote the shortest path distance functions for vertex pairs in $G$ and $S$, respectively. In this work, we study $t$-spanners of a metric space; more precisely, we study $t$-spanners of the corresponding \emph{metric graph} which is the complete graph on the set of points in the metric where the weight of each edge $pq$ denotes the metric distance between $p$ and $q$. When we refer to metric spaces in the following, we assume them to be finite.

Spanners have been used in many applications including distributed systems, communication networks, robotics and more~\cite{ADDJS93,NS07}. In this work, we are interested in $(1+\epsilon)$-spanners (herein referred to simply as \emph{spanners}) of geometric and metric graphs where $\epsilon < 1$ denotes a fixed constant. One way to measure the quality of spanners is by their \emph{lightness} which is the ratio of the weight of spanner's edges  to the weight of a minimum spanning tree of $G$. One prominent application of spanners with constant lightness is in designing faster polynomial time approximation schemes\footnote{A polynomial-time approximation scheme is an algorithm which, for a fixed error parameter $\epsilon$, finds a solution whose value is within $1\pm\epsilon$ of optimal in polynomial time.} (PTAS) for the TSP problem~\cite{RS98,Gottlieb15,Klein05,BDT14,BLW17}. A classical spanner algorithm~\cite{ADDJS93} constructs a $(1+\epsilon)$-spanner of a graph $G$ by considering edges in non-decreasing order of weight and adding the current edge $pq$ to the spanner if there is not already a $p$-to-$q$ path of weight at most $(1+\epsilon)$ times the weight of $pq$; the resulting spanner is called the \emph{greedy $(1+\epsilon)$-spanner} of $G$.

The doubling dimension~\cite{Assouad83,Heinonen01} of a metric space is the smallest $d$ such that every ball of radius $r$ is covered by $2^d$ balls of radius  most $\frac{r}{2}$. In this work, we show that:

\begin{theorem}\label{thm:main}
The greedy $(1+\epsilon)$-spanner of a metric space of doubling dimension $d$ has lightness $\ed$.
\end{theorem}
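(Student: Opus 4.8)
The plan is to decompose the greedy spanner $S$ according to the scales of its edges, bound the weight contributed at each scale using the packing property of doubling metrics, and then amalgamate the scales by exploiting the \emph{greedy} property (not merely the spanner property), so as not to lose a factor logarithmic in the spread of the point set.

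First I would record the two features of $S$ that the argument uses. Writing $|e|=d(p,q)$ for an edge $e=pq$: (i) $S$ is a $(1+\epsilon)$-spanner, so $d_S(x,y)\le(1+\epsilon)\,d(x,y)$ for all points $x,y$; and (ii) the greedy property: for every $e=pq\in S$, the subgraph $S_{<e}$ formed by the edges inserted before $e$ satisfies $d_{S_{<e}}(p,q)>(1+\epsilon)|e|$. Fix a minimum spanning tree $T$. Since $T$ connects the two endpoints of every $e\in S$, we have $|e|\le w(T)$; in particular the edge lengths of $S$ are bounded above, and trivially $w(T)\le w(S)$. The goal is $w(S)\le\epsilon^{-O(d)}\,w(T)$.

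Next, partition $S=\bigsqcup_i S_i$ with $S_i=\{\,e\in S:2^i<|e|\le 2^{i+1}\,\}$ and establish a per-scale sparsity bound: for every point $x$, at most $\epsilon^{-O(d)}$ edges of $S_i$ have an endpoint within distance $O(2^i)$ of $x$. (Such an edge in fact has \emph{both} endpoints within $O(2^i)$ of $x$, as its length is $\Theta(2^i)$.) This is the step that uses packing. Its core is the sub-claim that two edges $e=pq$ and $e'=p'q'$ of $S_i$ cannot simultaneously satisfy $d(p,p')<c\,\epsilon\,2^i$ and $d(q,q')<c\,\epsilon\,2^i$ for a suitable constant $c$: property (i) provides a $p'$-to-$q'$ walk going $p'\to p$, along $e$, then $q\to q'$, of length at most $(1+\epsilon)d(p',p)+|e|+(1+\epsilon)d(q,q')$; its two connecting sub-walks are short enough to consist only of edges present when the later-inserted of $e,e'$ was inserted, so property (ii) forces $(1+\epsilon)(d(p',p)+d(q,q'))>\epsilon|e'|$, hence $2(1+\epsilon)\,c\,\epsilon\,2^i>\epsilon\,2^i$, which is impossible for small $c$. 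Rounding the two endpoints of each $S_i$-edge near $x$ to a $\Theta(c\,\epsilon\,2^i)$-net of the $O(2^i)$-ball about $x$ --- which has only $\epsilon^{-O(d)}$ points by packing --- maps distinct such edges to distinct (unordered) net-pairs by the sub-claim, yielding the bound.

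The crux, and the step I expect to be the main obstacle, is to combine the scales without paying a factor logarithmic in the spread of the point set: the naive estimate only gives $w(S_i)\le\epsilon^{-O(d)}\,w(T)$ at \emph{every} scale. The repair is to charge, at each scale $i$, the edges of $S_i$ against a partition of $T$ into connected subtrees (clusters) of diameter $\Theta(\epsilon\,2^i)$, sending each $e\in S_i$ to the $\epsilon^{-O(d)}$ clusters meeting an $O(2^i)$-ball about one of its endpoints, so that each cluster absorbs only $\epsilon^{-O(d)}\cdot 2^i$ units of spanner weight. The difficulty is that a scale-$i$ cluster may carry very little MST weight, so one cannot simply assert that every cluster pays for $\Omega(2^i)$ of $T$; the resolution must invoke property (ii) \emph{across} scales --- a large-scale spanner edge near a region cannot be short-cut by the small-scale spanner edges there --- to conclude that the total weight of spanner edges charged to the vicinity of a fixed MST edge $f$, summed over \emph{all} relevant scales, is $\epsilon^{-O(d)}|f|$ rather than $\epsilon^{-O(d)}|f|$ times the number of scales. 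Summing over $f\in T$ then gives $w(S)\le\epsilon^{-O(d)}\,w(T)$. Carrying out this cross-scale bookkeeping from the packing property alone --- in effect re-deriving inside doubling metrics the consequences Narasimhan and Smid extracted from their ``leapfrog property'' in the Euclidean setting --- is the technical heart of the proof.
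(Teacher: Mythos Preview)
Your per-scale packing step is essentially the paper's Lemma~\ref{lm:K-structure}: the cluster graph at each level is simple of maximum degree $\epsilon^{-O(d)}$. One remark: the phrase ``its two connecting sub-walks are short enough to consist only of edges present'' is not literally correct (a spanner path between nearby points may use long edges), but the conclusion stands because in a metric the direct edge $pp'$ has weight $d(p,p')<|e'|$, hence was processed before $e'$, so a $(1+\epsilon)$-path from $p$ to $p'$ already lies in $S_{<e'}$; each edge on that path has length at most the path length $<|e'|$.

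The genuine gap is the cross-scale step, which you flag as the crux but do not carry out. Your scheme sends each $e\in S_i$ to nearby scale-$i$ clusters of diameter $\Theta(\epsilon\,2^i)$, so each cluster absorbs $\epsilon^{-O(d)}\cdot 2^i$ weight; you then want this, redistributed to the MST edges inside the cluster and summed over scales, to total $\epsilon^{-O(d)}|f|$ at every $f\in T$. But a scale-$i$ cluster need not contain $\Omega(\epsilon\,2^i)$ of MST weight --- it may be a handful of very close points --- so the per-edge charge at scale $i$ can be arbitrarily larger than $|f|$, and a fixed $f$ can receive nontrivial charge at every one of $\Theta(\log(\mathrm{spread}))$ scales. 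The heuristic ``a large-scale spanner edge near a region cannot be short-cut by the small-scale spanner edges there'' does not rule this out: the greedy property prevents two spanner edges from being too close at the \emph{same} scale, but says nothing about how many scales hit a given MST edge.

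The paper's mechanism is structurally different from what you sketch. First it pays a $\log\frac{1}{\epsilon}$ factor up front (absorbed into $\epsilon^{-O(d)}$) to split the edges into $\lceil\log\frac{1}{\epsilon}\rceil$ interleaved groups, so that within one group consecutive levels differ by a factor $1/\epsilon$ rather than $2$; thus level-$(i{-}1)$ clusters have diameter $O(\epsilon\ell_i)$. It then runs an \emph{iterative} clustering --- level-$i$ clusters are unions of level-$(i{-}1)$ clusters, not fresh MST pieces --- and maintains inductively the invariant DC1: each cluster carries credit $\ge c(\epsilon)\cdot\max\{\mathrm{diameter},\,\ell_i/2\}$. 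The floor $\ell_i/2$ is exactly what your charging lacks: it guarantees every cluster, however light in MST terms, can pay $\Omega(\ell_i)$ toward incident spanner edges. Preserving DC1 while simultaneously releasing credit to pay the level-$i$ edges is where the work is (the four-phase construction and the canonical-pair argument showing each new cluster contains a ``spare'' $\epsilon$-cluster $Y$ whose credit is not needed for DC1). That inductive bookkeeping, not a direct per-MST-edge charge, is how the $\log(\mathrm{spread})$ loss is avoided.
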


Geometric spanners have a rich history. In 2-dimensional Euclidean space, $O(1)$-spanners of lightness $O(1)$ have been known since the late 80s~\cite{ADDJS93,LL89}. Das, Heffernan and Narasimhan~\cite{DHN93} sketched an intricate argument showing that $t$-spanners for any fixed $t > 1$ in 3-dimensional Euclidean space have lightness $O(1)$. Their main contribution is an analysis of the \emph{leap-frog property} of the spanner edges found by the greedy algorithm. Later, Das, Narasimhan and  Salowe~\cite{DNS95} sketched a generalization of the proof by Das, Heffernan and Narasimhan~\cite{DHN93} to show that $t$-spanners for any fixed $t > 1$ in  $d$-dimensional Euclidean space have lightness $O(1)$. However, the dependency of the lightness on $t$ and $d$ was not explicitly computed. Rao and Smith~\cite{RS98} redid the analysis of Arya, Das, Mount, Salowe and Smid~\cite{ADMSS95} to show that the constant in the work of Das, Heffernan and Narasimhan~\cite{DHN93} is  $\left(\frac{d}{\epsilon}\right)^{O(d)}$. Narasimhan and Smid~\cite{NS07} devoted a $60$-page chapter of \emph{Geometric Spanner Networks} to give full details of the analysis of the lightness of greedy spanners. They show that greedy spanners for $d$-dimensional Euclidean space have lightness $ \epsilon^{-O(d)}$; their proof heavily relies on the geometry of Euclidean space. Our Theorem~\ref{thm:main} immediately implies a simpler proof for the same lightness bound in Euclidean space; it is well-known that a point set in $d$-dimensional Euclidean metric has doubling dimension $\Theta(d)$. Instead of relying on the leap-frog property as in previous works, which is not easy to analyze in doubling metrics, we only use the simple packing property of the doubling metrics where Euclidean space is a special case.

Spanners in doubling metrics were first considered by Gao, Guibas and Nguyen~\cite{GGN06} who showed that a $n$-point set in doubling dimension $d$ has a spanner of $\ed n$ edges. By analyzing the greedy algorithm, Smid~\cite{Smid09} showed that greedy spanners have $O(n)$ edges and $O(\log n)$ lightness. Beating the $O(\log n)$ lightness  bound of Smid~\cite{Smid09} had been an important open problem until the recent work by Gottlieb~\cite{Gottlieb15}, who showed that a metric of doubling dimension $d$ has a spanner of lightness $\left( \frac{d}{\epsilon}\right)^{O(d)}$. We note that the construction of Gottlieb~\cite{Gottlieb15} is non-greedy, conceptually involved and takes $O(n\log^2 n)$ time.  However, two questions remain open. First, can we design a spanner of lightness $\ed$ to match the bound in Euclidean space? Second, is there a more refined analysis of the greedy algorithm to achieve the bound  $\ed$?  Gottlieb~\cite{Gottlieb15} asked the first question in his paper. The second question was partially addressed by Filtser and Solomon~\cite{FS16}, who showed that greedy spanners in doubling metrics (as well as graph classes closed under edge removal) are \emph{existentially optimal}: if there is a spanner construction of lightness bound $l(\epsilon,d)$, then greedy spanners have lightness $O(l(\epsilon,d))$. Combined with Gottlieb's results~\cite{Gottlieb15}, the existential optimality implies that greedy spanners have $\left( \frac{d}{\epsilon}\right)^{O(d)}$ lightness. In this paper, we resolve both questions affirmatively by presenting a refined and comparatively simple analysis of the greedy algorithm. Our result, in combination with the result of  Filtser and Solomon~\cite{FS16}, implies an $O(n\log n)$ time algorithm to find a spanner of lightness $\ed$.

\subsection{Techniques}

Our analysis is built primarily upon our result~\cite{BLW17} in minor-free graphs which in turn is based on the techniques of Chechick and Wulff-Nilsen~\cite{CW16} for general graphs. We briefly review the $H$-minor-free techniques~\cite{BLW17} here, highlighting new ideas required for doubling metrics. The first step is to reduce the problem on the input graph to the same problem on graphs that have unit-edge-weight minimum spanning trees (MSTs), by rounding small-weight edges and subdividing large-weight edges. Then the greedy algorithm is applied to this slightly modified graph making the construction non-greedy as a whole. However, in the setting of doubling metrics, we cannot use the same simplification since rounding changes the metric. We instead directly analyze the greedy spanner of the input.

To analyze the spanner of an $H$-minor free graph,~\cite{BLW17} uses iterative clustering. Spanner edges are partitioned into $\log \frac{1}{\epsilon}$ sets\footnote{Here, $\log$ denotes the base $2$ logarithm.}, and then the total weight of each set is bounded separately; this induces the $\log \frac{1}{\epsilon}$ factor in the lightness. Each set consists of spanner edges in an exponential scale of many \emph{levels}. First, a non-negative credit $c(\epsilon)$ is assigned to each $\mst$ edge of unit weight; $\ce \log \frac{1}{\epsilon}$ is also the lightness of the spanner. In each level, clusters are constructed iteratively from clusters of the previous level; level-$1$ clusters are constructed directly from the $\mst$. An invariant is maintained that each cluster must have some amount of credit to pay for spanner edges in their level. Credits of level-$1$ clusters are taken directly from $\mst$ edges. Credits of level-$i$ clusters are taken from  credits of clusters of level $i-1$ and $\mst$ edges connecting those lower-level clusters. However, to pay for the spanner edges, level-$i$ clusters cannot take all credits from level-$(i-1)$ clusters. Instead, it is guaranteed that on average, each level-$(i-1)$ cluster has a non-trivial amount of credit left to pay for spanner edges. The minor-free property is then used to argue that each cluster on average must pay for only a constant number of spanner edges in each level. 

Our new argument is also based on our earlier iterative cluster construction. However, we rely on the packing property of doubling metrics (defined below) to show that each cluster needs to pay for a constant number of spanner edges in the same level. This property of doubling metric spanners allows us in fact to simplify the cluster construction that was used for $H$-minor free graphs.

Let $G(V,E)$ be the graph representing a metric of doubling dimension $d$. For each edge $e\in E$, we define the weight function $w(e)$ to be the distance between its endpoints in the metric.  Let $n = |V(G)|$ and $m = |E(G)|$. We directly analyze the spanner produced by the greedy algorithm. For a review of the greedy spanner algorithm, see Appendix~\ref{app:greed}. Let $S$ be the greedy spanner of $G$. Smid~\cite{Smid09} showed (in two pages) that:

\begin{lemma}\label{thm:sparsity} 
$|E(S)| \leq \delta(\epsilon) n$ where $\delta(\epsilon) = \epsilon^{O(-d)}$.
\end{lemma}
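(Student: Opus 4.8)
The plan is to bound the number of spanner edges incident to any point $p\in V$ via the packing property of doubling metrics, then sum over all points. The key structural fact about the greedy spanner is that it has large \emph{angular spread}: if $pq$ and $pr$ are both edges of $S$ with, say, $w(pq)\le w(pr)$, and the two edges have roughly comparable lengths (within a constant factor), then $q$ and $r$ cannot be too close together — otherwise the path $p\to q$ together with the already-present near-shortest path $q\to r$ would certify $pr$ as redundant, so the greedy algorithm would not have added it. Concretely, $w(qr) > c\cdot w(pr)$ for a constant $c = c(\epsilon)$ depending only on $\epsilon$, since the greedy rule rejects $pr$ whenever $w(pq)+d_S(q,r)\le (1+\epsilon)w(pr)$, and one shows $d_S(q,r)$ is close to $w(qr)$ for the relevant pairs.

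First I would fix $p$ and, for each integer $j$, consider the set $E_j$ of spanner edges $pq$ with $w(pq)\in[2^j,2^{j+1})$. All endpoints $q$ of edges in $E_j$ lie in a ball of radius $2^{j+1}$ around $p$, and by the separation property above, any two such endpoints are at mutual distance $\ge c(\epsilon)\cdot 2^j$. The packing property of a metric of doubling dimension $d$ — that a ball of radius $R$ contains at most $(R/\rho)^{O(d)}$ points that are pairwise $\rho$-separated — then gives $|E_j|\le (1/c(\epsilon))^{O(d)} = \epsilon^{-O(d)}$. Next I would argue that only $O(\log(1/\epsilon))$ distinct length-scales $j$ can contribute edges incident to $p$: if $pq$ and $pr$ are spanner edges with $w(pq)$ much smaller than $w(pr)$ — by more than a factor depending on $\epsilon$ — then again the short edge plus a spanner path rules out the long edge, so the lengths of edges at $p$ span only a bounded multiplicative range. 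Combining, the degree of $p$ in $S$ is at most $\epsilon^{-O(d)}\cdot O(\log(1/\epsilon)) = \epsilon^{-O(d)}$, and summing over the $n$ points and dividing by $2$ yields $|E(S)|\le \delta(\epsilon)\,n$ with $\delta(\epsilon)=\epsilon^{-O(d)}$.

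The main obstacle is making the ``a short spanner edge at $p$ precludes a much longer one'' step precise: one needs that after the short edge $pq$ is inserted, the spanner already contains a path from $q$ to $r$ of length at most $(1+\epsilon)w(qr)$, which requires knowing the greedy spanner is a genuine $(1+\epsilon)$-spanner on the sub-instance processed so far and controlling how $w(qr)$ compares to $w(pr)$ (here $w(qr)\le w(pq)+w(pr)$ and $w(qr)\ge w(pr)-w(pq)$, so when $w(pq)\le \epsilon\, w(pr)/4$, say, the triangle-inequality slack is small enough). Getting the constant in the exponent to be exactly $O(d)$ rather than, e.g., $O(d\log(1/\epsilon))$ requires being careful that the separation constant $c(\epsilon)$ enters the packing bound only polynomially in $1/c(\epsilon)$ raised to the $O(d)$, which it does, and that the number of scales is additive rather than multiplicative — this is the one place where a naive argument loses an extra $\log(1/\epsilon)$ or worse in the exponent.
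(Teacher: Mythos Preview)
The paper does not give its own proof of this lemma; it simply attributes the result to Smid~\cite{Smid09} (``Smid showed (in two pages) that \ldots''). So there is no in-paper argument to compare against, and the relevant question is whether your sketch is a correct reconstruction.

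It is not. The per-vertex argument breaks at the step where you claim that only $O(\log(1/\epsilon))$ length scales can occur among the edges incident to a fixed point $p$. In a general doubling metric the maximum degree of the greedy spanner is \emph{not} bounded by any function of $\epsilon$ and $d$. A clean counterexample: take points $p,q_1,\dots,q_k$ with $d(p,q_i)=3^i$ and $d(q_i,q_j)=3^i+3^j$ for $i\ne j$ (a tree metric with $p$ at the center). One checks easily that the doubling dimension is $O(1)$: any ball $B(p,3^m)$ is covered by $B(p,3^{m-1})\cup B(q_m,3^{m-1})$, and balls around the $q_j$'s behave similarly. The greedy spanner is exactly the star $\{pq_1,\dots,pq_k\}$, since when $pq_j$ is considered the vertex $q_j$ is still isolated (every edge $q_iq_j$ is strictly heavier than $pq_j$), and afterwards $d_S(q_i,q_j)=3^i+3^j=d(q_i,q_j)$ so no $q_iq_j$ is ever added. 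Thus $p$ has degree $k=n-1$, spanning $k$ distinct dyadic scales.

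The specific place your justification fails is the sentence ``the short edge plus a spanner path rules out the long edge.'' To run that argument for $pq$ short and $pr$ long you need $d_S(q,r)\le (1+\epsilon)w(qr)$ at the moment $pr$ is processed, which requires $w(qr)\le w(pr)$. But the triangle inequality only gives $w(qr)\in[w(pr)-w(pq),\,w(pr)+w(pq)]$; in the star above $w(q_iq_j)=w(pq_i)+w(pq_j)>w(pq_j)$ always, so the hypothesis is never met. Your per-scale packing bound (at most $\epsilon^{-O(d)}$ edges at $p$ within one dyadic band) is correct, but the scale-count step cannot be rescued. The sparsity bound $|E(S)|\le\epsilon^{-O(d)}n$ is true, but its proof must use a global charging (e.g., a well-separated-pair/net argument charging each edge to a vertex whose nearest-neighbor distance is comparable to the edge length, or Smid's original argument), not a uniform degree bound.
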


The following packing property of doubling metric is well-known (see~\cite{Smid09}):
\begin{lemma}[Packing property] \label{lm:packing-dd} A point set $X$ of a metric of doubling dimension $d$ that is contained in a ball of radius $R$ and for every $x\not= y \in X$, $d(x,y) >  r$, has $|X|\leq \left( \frac{4R}{r}\right)^d$.
\end{lemma}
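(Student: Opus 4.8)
The plan is to iterate the definition of doubling dimension on a ball enclosing $X$ until the covering balls are small enough that each can contain at most one point of $X$, and then simply count the covering balls.

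First I would fix a ball $B$ of radius $R$ with $X\subseteq B$ and apply the doubling property repeatedly. By definition $B$ is covered by $2^d$ balls of radius at most $R/2$; each of these is in turn covered by $2^d$ balls of radius at most $R/4$; and iterating $k$ times shows that $B$ — hence $X$ — is covered by $2^{kd}$ balls of radius at most $R/2^k$. I would then take $k$ to be the least non-negative integer with $R/2^k\le r/2$, namely $k=\max\{0,\lceil\log_2(2R/r)\rceil\}$, so that all the covering balls have radius at most $r/2$.

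The key observation is that a ball of radius at most $r/2$ contains at most one point of $X$: if it held two distinct points $x,y\in X$, then by the triangle inequality $d(x,y)\le 2\cdot(r/2)=r$, contradicting the hypothesis that all pairwise distances in $X$ exceed $r$. Hence $|X|$ is at most the number of covering balls, so $|X|\le 2^{kd}$. Since $k\le\log_2(2R/r)+1=\log_2(4R/r)$ (using $\lceil\alpha\rceil\le\alpha+1$), this yields $|X|\le 2^{kd}\le(4R/r)^d$, as required.

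This lemma is elementary, so there is no real obstacle to overcome; the only places that need any care are verifying that the doubling property may legitimately be reapplied at each scale (tracking the $2^{kd}$ bound on the number of covering balls correctly), and choosing $k$ so that the final bound comes out exactly as $\left(\frac{4R}{r}\right)^d$ rather than with a worse constant.
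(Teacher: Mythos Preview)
The paper does not actually prove this lemma; it simply states it as well-known and cites Smid's lecture notes. Your argument is precisely the standard proof of the packing property --- iterate the doubling definition $k$ times until the covering balls have radius at most $r/2$, observe each such ball holds at most one point of $X$, and bound $2^{kd}$ via $k\le\log_2(4R/r)$ --- and it is correct.

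One tiny caveat: in the degenerate regime $4R<r$ your inequality $k\le\log_2(4R/r)$ breaks (you have $k=0$ while the right-hand side is negative), but in that regime the lemma as stated is itself slightly off (a singleton $X$ vacuously satisfies the separation hypothesis for any $R$). Every use of the lemma in the paper has $R\ge r$, so this is harmless.
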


\section{Assigning credits to $\mst$ edges}

Let $w_0 = \frac{w(\mst)}{n-1}$ be the average weight of an $\mst$ edge. We first bound the total weight of edges that have weight at most $w_0$.

\begin{claim} \label{clm:light-edges}
Let $L_S$ be the set of edges of $S$ of weight at most $w_0$. Then, $w(L_S) \leq 2\delta(\epsilon) w(\mst)$.
\end{claim}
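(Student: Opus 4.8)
The plan is to bound $w(L_S)$ by counting edges and using the fact that each has weight at most $w_0$. Since $L_S \subseteq E(S)$, Lemma~\ref{thm:sparsity} gives $|L_S| \leq |E(S)| \leq \delta(\epsilon) n$. Every edge in $L_S$ has weight at most $w_0 = \frac{w(\mst)}{n-1}$, so
\[
w(L_S) \;\leq\; |L_S|\cdot w_0 \;\leq\; \delta(\epsilon)\, n \cdot \frac{w(\mst)}{n-1}.
\]
It remains to observe that $\frac{n}{n-1} \leq 2$ for all $n \geq 2$ (and the claim is vacuous for $n = 1$, where $S$ has no edges), which yields $w(L_S) \leq 2\delta(\epsilon)\, w(\mst)$ as desired.

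There is essentially no obstacle here: the claim is a direct consequence of the sparsity bound of Smid (Lemma~\ref{thm:sparsity}) combined with the trivial observation that the total weight of $k$ edges each of weight at most $w_0$ is at most $k w_0$. The only mild subtlety is handling the edge case $n=1$ (or more generally making sure the bound $\frac{n}{n-1}\le 2$ is applied only when $n\ge 2$), which is immediate. The point of isolating this claim is presumably that in the rest of the argument one may now restrict attention to spanner edges of weight \emph{greater} than $w_0$, for which the iterative clustering and credit scheme can be set up cleanly; the light edges are simply paid for upfront by the $O(\delta(\epsilon))$ additive term in the lightness.
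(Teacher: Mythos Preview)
Your proof is correct and follows exactly the same approach as the paper: bound $|L_S|$ by $\delta(\epsilon)n$ via Lemma~\ref{thm:sparsity}, multiply by $w_0 = w(\mst)/(n-1)$, and use $n/(n-1)\le 2$. The paper's proof is the one-line chain $w(L_S) \leq w_0 |L_S| \leq w_0\delta(\epsilon) n = \frac{w(\mst)}{n-1}\delta(\epsilon)n \leq 2\delta(\epsilon) w(\mst)$, which is precisely what you wrote.
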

\begin{proof}
$w(L_S) \leq w_0 |L_S| \overset{\mbox{\footnotesize{Lemma~\ref{thm:sparsity}}}}{\leq} w_0\delta(\epsilon) n =  \frac{w(\mst)}{n-1}\delta(\epsilon)n \leq 2\delta(\epsilon) w(\mst)$.
\end{proof}

We now focus on bounding the total weight of edges of weight at least $w_0$ in $S$. We subdivide and allocate credits to $\mst$ edges such that every $\mst$ edge has weight at most $w_0$ and at least $c(\epsilon)w_0$ credits where $\ce$ is a constant that only depends on $\epsilon$ and will be specified later. We will guarantee that the total allocated credit is $O(\ce)w(\mst)$ where $O(\ce)$ is also the lightness of the spanner.  First, we subdivide every $\mst$ edge $e$ of weight more than $w_0$ into $\lceil \frac{w(e)}{w_0} \rceil$ new edges with equal weights summing up to $w(e)$; note that each new edge has weight at most $w_0$. Letting $S'$ be the new graph, we have $w(\mst(S')) = w(\mst)$. We then allocate $c(\epsilon)w_0$ credits to each $\mst$ edge of $S'$. 

\begin{claim} \label{clm:mst-credit}
The total credit allocated to the $\mst$ edges of $S'$ is at most $2c(\epsilon) w(\mst)$.
\end{claim}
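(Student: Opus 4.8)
The plan is to turn the claim into a simple edge count. Each $\mst$ edge of $S'$ is assigned exactly $c(\epsilon)w_0$ credit, so the total allocated credit equals $c(\epsilon)w_0\cdot|E(\mst(S'))|$; since $w_0(n-1)=w(\mst)$ by the definition of $w_0$, it suffices to prove $|E(\mst(S'))|\le 2(n-1)$, which then yields the bound $2c(\epsilon)w(\mst)$ directly.

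To count $|E(\mst(S'))|$, I would use the fact (already implicit in $w(\mst(S'))=w(\mst)$) that $\mst(S')$ is obtained from $\mst$ by leaving every edge $e$ with $w(e)\le w_0$ intact and replacing every edge $e$ with $w(e)>w_0$ by its $\lceil w(e)/w_0\rceil$ equal-weight pieces. The one estimate that does all the work is the uniform bound that the number of pieces contributed by any $\mst$ edge $e$ is at most $w(e)/w_0+1$: for $w(e)\le w_0$ the single piece is counted once and $1\le w(e)/w_0+1$, while for $w(e)>w_0$ this is just $\lceil x\rceil\le x+1$ with $x=w(e)/w_0$. Summing over all $n-1$ edges of $\mst$,
\[
|E(\mst(S'))|\ \le\ \sum_{e\in \mst}\left(\frac{w(e)}{w_0}+1\right)\ =\ \frac{w(\mst)}{w_0}+(n-1)\ =\ (n-1)+(n-1)\ =\ 2(n-1),
\]
where the penultimate step uses $w_0=w(\mst)/(n-1)$. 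Multiplying by $c(\epsilon)w_0$ and applying $w_0(n-1)=w(\mst)$ again finishes the proof.

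I do not expect a real obstacle here. The only point that needs a little care is combining the subdivided and non-subdivided $\mst$ edges under a single bound, together with the observation that the ``$+1$'' terms contribute exactly $n-1$ in total rather than something larger; a cruder treatment of the ceiling (for instance bounding $\lceil x\rceil<2x$ for $x>1$ and counting the short edges separately) would lose a factor and only give $3c(\epsilon)w(\mst)$, so this slightly sharper accounting is what produces the stated constant $2$.
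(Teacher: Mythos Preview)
Your proposal is correct and is essentially the same argument as the paper's: both bound $|E(\mst(S'))|$ by $\sum_{e\in\mst}\lceil w(e)/w_0\rceil\le\sum_{e\in\mst}(w(e)/w_0+1)=2(n-1)$ and then multiply by $c(\epsilon)w_0$. The only cosmetic difference is that the paper carries the factor $c(\epsilon)w_0$ through the chain of inequalities rather than first isolating the edge count, and your explicit remark that the bound $\lceil w(e)/w_0\rceil\le w(e)/w_0+1$ also covers the non-subdivided edges (where the left side is $1$) is a harmless clarification of something the paper leaves implicit.
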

\begin{proof}
The total credits assigned to $\mst$ edges of $S'$ is:
\begin{equation} 
\begin{split}
\ce w_0|E(\mst(S'))| &\leq  \ce w_0 \sum_{e \in \mst}\lceil \frac{w(e)}{w_0} \rceil \leq \ce w_0 \left(\sum_{e \in \mst}(\frac{w(e)}{w_0} +1)\right)\\
&= \ce w(\mst) + \ce w_0(n-1) = 2\ce w(\mst)
\end{split}
\end{equation} \qedhere
\end{proof}

\section{Iterative Clustering}

Let $J_0 = \{e \in S', w_0 < w(e) \leq \frac{2w_0}{\epsilon} \}$. We first bound the weight of $J_0$ and pay for edges in $J_0$ separately. The purpose is to simplify the base case in the inductive amortized argument that we present below.

\begin{claim} \label{clm:J0-weight} $w(J_0) \leq \frac{4\delta(\epsilon)}{\epsilon}w(\mst)$.
\end{claim}
\begin{proof}
\begin{equation*}
w(J_0) =  \sum_{e \in J_0, w(e) > w_0} w(e)\overset{\mbox{\footnotesize{Lemma~\ref{thm:sparsity}}}}{\leq} \delta(\epsilon) n \frac{2w_0}{\epsilon}\leq \frac{4\delta(\epsilon)}{\epsilon}w(\mst) \qedhere
\end{equation*}
\end{proof}

Let $I_\epsilon = \lceil \log \frac{1}{\epsilon} \rceil$ and $I_n = \lceil \log n \rceil$. Note that the longest distance between any two vertices in $S'$ is at most $n\cdot w_0$.  We partition the spanner edges (of weight at least $w_0$) of $S'$ into $I_n\cdot I_\epsilon$ sets $\{\Pi_i^j, 0 \leq i \leq I_n-1, 0 \leq j \leq I_\epsilon-1\}$ where each edge $e \in \Pi_i^j$ has weight in the range $(\frac{2^{j}}{\epsilon^i}w_0,\frac{2^{j+1}}{\epsilon^i}w_0]$. For each $ 0 \leq  j \leq I_\epsilon -1$, let
\begin{equation} \label{eq:Sj-def}
S_j = \bigcup_{i=0}^{I_n-1} \Pi_i^j
\end{equation}

\begin{lemma} \label{lm:main} For each $ 0 \leq  j \leq I_\epsilon -1$, there is a set of spanner edges $B$ such that $w(B) \leq \ed\cdot w(\mst)$ and $w(S_j\setminus B) \leq \ed w(\mst)$.
\end{lemma}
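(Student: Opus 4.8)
The plan is to fix a single value of $j$ and carry out an iterative clustering argument over the levels $i = 0, 1, \dots, I_n-1$, building a hierarchy of clusters where level-$i$ clusters are responsible for paying (via credits) for the spanner edges in $\Pi_i^j$. The key scale separation is that an edge in $\Pi_i^j$ has weight roughly $\epsilon^{-i}w_0$ (up to a factor $2^{j+1}$), so clusters at level $i$ should have diameter on the order of $\epsilon^{-i}w_0$ as well — large enough that each endpoint of a level-$i$ spanner edge lies in some cluster, but small relative to the edge length scale so that a bounded number of clusters are "near" any given cluster. I would set up the credit invariant exactly as sketched in the introduction: every level-$i$ cluster carries at least $\ce w_0 \cdot \epsilon^{-i}$ credits (scaled to its level), level-$1$ clusters inherit credits from the subdivided $\mst$ edges of $S'$ (each of which carries $\ce w_0$), and level-$(i+1)$ clusters are formed by merging groups of level-$i$ clusters connected by $\mst$ edges, absorbing their credits plus the credits on the connecting $\mst$ edges. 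The bookkeeping must ensure that after paying for $\Pi_i^j$ at each level, an average level-$i$ cluster still retains a non-trivial fraction of its credit to seed the next level.

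**The heart of the argument** is the packing step that replaces the leap-frog property. I would show that for a fixed level-$i$ cluster $K$, the number of spanner edges in $\Pi_i^j$ with an endpoint in $K$ that $K$ must pay for is bounded by $\epsilon^{-O(d)}$. This follows from Lemma~\ref{lm:packing-dd}: all such edges have one endpoint within distance $O(\epsilon^{-i}w_0)$ of $K$, and because $S$ is a greedy $(1+\epsilon)$-spanner, the other endpoints of these edges cannot be too close to one another — if two spanner edges $pq$ and $pq'$ (or more generally two edges from the same cluster) had endpoints $q,q'$ within distance roughly $\epsilon \cdot \epsilon^{-i}w_0$ of each other, the greedy algorithm would not have added the longer one, since the shorter edge plus the short hop between $q$ and $q'$ plus the cluster's own short diameter would already give a $(1+\epsilon)$-path. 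Hence these endpoints form an $\Omega(\epsilon^{1-i}w_0)$-separated set inside a ball of radius $O(\epsilon^{-i}w_0)$, so there are at most $(O(1/\epsilon))^d = \epsilon^{-O(d)}$ of them. Multiplying: each level-$i$ cluster pays at most $\epsilon^{-O(d)}$ edges, each of weight at most $2^{j+1}\epsilon^{-i}w_0 \le (2/\epsilon)\cdot\epsilon^{-i}w_0$, and it has $\approx \ce \epsilon^{-i}w_0$ credits available, so choosing $\ce = \epsilon^{-O(d)}$ large enough lets every cluster afford its share. Summing the telescoping credit flow over all levels $i$ and recalling the total $\mst$ credit is $O(\ce)w(\mst) = \ed w(\mst)$ (Claim~\ref{clm:mst-credit}), we pay for all of $S_j \setminus B$ with weight $\ed w(\mst)$.

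**The exceptional set $B$** collects the edges the credit scheme cannot handle cleanly. These are of two kinds: edges in $\Pi_i^j$ whose endpoints are \emph{not} both contained in level-$i$ clusters (i.e.\ "long" edges spanning beyond the current clustering scale, or edges at the coarsest levels where few clusters exist), and edges associated with clusters that merged "too much" — when forming level-$(i+1)$ clusters, a few level-$i$ clusters may be absorbed into a giant cluster whose diameter would exceed the allowed $O(\epsilon^{-(i+1)}w_0)$; rather than track these, I would charge the incident spanner edges directly to $B$. The point is that each such charge can be bounded against $\mst$ weight locally using Lemma~\ref{thm:sparsity} (sparsity: only $\delta(\epsilon)n$ spanner edges total) together with the geometric decay of scales, giving $w(B) \le \ed w(\mst)$. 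The base level $\Pi_0^j \subseteq J_0$ is already handled by Claim~\ref{clm:J0-weight}, which is precisely why $J_0$ was peeled off earlier, so the induction can start cleanly at level $i=1$.

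**I expect the main obstacle** to be the precise design of the cluster merging rule so that three competing requirements hold simultaneously: (i) every endpoint of a level-$i$ edge in $\Pi_i^j$ lies in a level-$i$ cluster (so the clusters must be "large enough" and there must not be too many uncovered vertices), (ii) cluster diameters stay $O(\epsilon^{-i}w_0)$ so the packing bound applies (so clusters must not be "too large"), and (iii) the amortized credit left per cluster after paying stays bounded below, which forces the merging to happen in groups whose size is controlled by the geometry of the $\mst$ restricted to the relevant scale. Balancing (i) and (ii) is the classic tension in these arguments; the doubling/packing property is what makes it tractable here, since it directly bounds how many level-$i$ clusters can sit inside a level-$(i+1)$ ball, and hence controls both the merge sizes and the number of edges charged. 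Getting the constants to line up — in particular verifying that the slack factor between consecutive levels is $\ge 1 - \epsilon^{O(d)}$ rather than something that degrades geometrically — will require careful but routine accounting.
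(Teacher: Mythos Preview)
Your high-level plan matches the paper's: iterative clustering with credit invariants (the paper's DC1/DC2), and the packing argument you sketch is exactly Lemma~\ref{lm:K-structure}/Claim~\ref{clm:center-dist}. But there is a genuine gap in the credit accounting, and it is precisely the ``main obstacle'' you flag without resolving. When a level-$i$ cluster is a \emph{path} of $\epsilon$-clusters joined by $\mst$ edges, its diameter equals the sum of the $\epsilon$-cluster diameters plus the $\mst$ edge weights, and the credit it inherits (via DC1 on the $\epsilon$-clusters and the $\mst$-edge credits) is exactly $\ce$ times that diameter. So maintaining DC1 for the new cluster consumes \emph{all} available credit, leaving nothing to pay for incident spanner edges. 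Your claim that ``an average level-$i$ cluster still retains a non-trivial fraction of its credit'' has no mechanism behind it; the packing bound limits how \emph{many} edges must be paid for, but does not produce any surplus credit to pay with.

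The paper's fix is a specific four-phase cluster construction whose whole point is to manufacture this surplus. Phase~1 forms clusters around \emph{branching} $\epsilon$-clusters in the $\mst$ tree, so that at least one $\epsilon$-cluster lies off the diameter path and its credit is free. Phase~2 handles the path-like case by grouping around a level-$i$ spanner edge $e$: because the greedy stretch forces the $\mst$ path between $e$'s endpoints to be longer than $w(e)$, one again finds an $\epsilon$-cluster off the diameter path (the ``canonical pair'' Claims~\ref{clm:DC1-P1}--\ref{clm:DC1-P2-2}). Only when \emph{neither} phase fires---i.e.\ the entire $\epsilon$-cluster tree is a single path and every level-$i$ edge is incident to an affix of that path (Observation~\ref{obs:exception})---do edges go into $B$, and then $w(B)$ is bounded by summing the geometric series $\sum_i \ell_i$ (Claim~\ref{clm:B-weihgt}). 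Your description of $B$ as collecting edges with uncovered endpoints or from ``giant'' clusters does not match this and would not obviously yield the required bound; in the paper every vertex is covered at every level, and diameters are controlled by construction (Claim~\ref{clm:dm-Cp}), so neither of your two $B$-sources actually arises.
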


It is not hard to see that Lemma~\ref{lm:main} directly implies Theorem~\ref{thm:main}. Thus, we only focus on proving Lemma~\ref{lm:main} for a fixed $j$. We refer to edges of $\Pi_i^j$ as edges in \emph{level $i$} (Equation~\ref{eq:Sj-def}). Let $\ell_i = \frac{2^{j+1}}{\epsilon^i}w_0$. Similar to our analysis for $H$-minor free graphs~\cite{BLW17}, we construct a set of clusters, which are subgraphs of $S'$, for each level and guarantee inductively two diameter-credits invariants:

\begin{description}
\item[DC1] A cluster of level $i$ of diameter $k$ has at least $\ce \cdot\max\{k,\frac{\ell_i}{2}\}$ credits.
\item[DC2] A cluster of level $i$ has diameter  at most $g\ell_i$ for some constant $g > 2$ (specified later).
\end{description}

A cluster of level $i$, say $C_i$, is the union of a subset of clusters in level $i-1$ connected by $\mst$ and level-$i$ spanner edges. Clusters of level $i-1$ are referred to as \emph{$\epsilon$-clusters}. To satisfy DC1, we assign the credits from $\epsilon$-clusters in $C_i$ and the $\mst$ edges connecting the $\epsilon$-clusters to $C_i$. However, we need to group $\epsilon$-clusters in such a way that there are some extra $\epsilon$-clusters whose credits are not needed to maintain DC1 for $C_i$. We will use credits of these extra $\epsilon$-clusters to pay for level-$i$ spanner edges incident to every $\epsilon$-cluster in $C_i$. The credit lower bound $c\ell_i/2$ (DC1) helps us achieve the goal.

To guarantee the diameter-credit invariants for level $0$, we greedily break the $\mst$ into components (level-0 clusters) of diameter at least $\ell_0$ and at most $4\ell_0$. Recall $\ell_0 = 2^{j+1}w_0 \leq \frac{2w_0}{\epsilon}$. To guarantee DC1, we use the credits of $\mst$ edges in the longest path of each cluster. Since the credit of each $\mst$ edge is at least its length, DC1 is satisfied.  Invariant DC2 follows directly from the construction. Note that we have already accounted for the weight of spanner edges of $E_0$ in Claim~\ref{clm:J0-weight}.

\subsection{Constructing higher level clusters}

We construct clusters of level $i$ from the $\epsilon$-clusters of level $i-1$. We assume that the stretch of the spanner is $1 + s\epsilon$ for some constant $s$ (independent of $\epsilon$) that we will pick sufficiently big to make our claims below hold. Furthermore, we assume that $\epsilon$ is bounded from above by a sufficiently small positive constant. We call vertices of $V(S')\setminus V(S)$ \emph{virtual vertices}. We call a cluster \emph{virtual}  of it only contains virtual vertices and \emph{non-virtual} otherwise. Let $\mathcal{K}(\mathcal{C}\epsilon, E_i)$ be the multigraph obtained by taking the subgraph of $G$ consisting of $\epsilon$-clusters and spanner edges in $E_i$ and contracting each $\epsilon$-cluster into a single vertex. Let $\ell = \ell_i$. 

\begin{lemma} \label{lm:K-structure} $\mathcal{K}(\mathcal{C}_\epsilon, E_i)$ is a simple graph of degree $\epsilon^{-O(d)}$.
\end{lemma}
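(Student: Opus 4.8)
The claim has two parts: $\mathcal{K}(\mathcal{C}_\epsilon, E_i)$ is (a) simple and (b) has maximum degree $\epsilon^{-O(d)}$. The plan is to prove both via the packing property (Lemma~\ref{lm:packing-dd}) together with the greedy stretch guarantee.

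For simplicity (no parallel edges), I would argue by contradiction. Suppose two distinct level-$i$ spanner edges $e_1=p_1q_1$ and $e_2=p_2q_2$ in $E_i$ both run between the same pair of $\epsilon$-clusters $C$ and $C'$, with $p_1,p_2\in C$ and $q_1,q_2\in C'$. Both edges have weight in $(\ell/2, \ell]$ (up to the constant factors in the definition of $\ell_i$ versus the band $(\frac{2^j}{\epsilon^i}w_0, \frac{2^{j+1}}{\epsilon^i}w_0]$ — I'd track this carefully). By invariant DC2 applied at level $i-1$, each of $C,C'$ has diameter at most $g\ell_{i-1} = g\epsilon\ell_i/2$, which is $O(\epsilon\ell)$. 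Say WLOG $e_2$ was added after $e_1$ by the greedy algorithm, so $w(e_2)\ge w(e_1)$. At the time $e_2$ was considered, there was already a path inside $C$ from $p_2$ to $p_1$ of length $O(\epsilon\ell)$, the edge $e_1$ of length $\le \ell$, and a path inside $C'$ from $q_1$ to $q_2$ of length $O(\epsilon\ell)$ — but wait, $\epsilon$-clusters are subgraphs of $S'$, not of $S$, so I must instead use that within a cluster there is a path in $S$ of comparable length, or route through the spanner; the cleanest fix is to note the clusters are connected subgraphs and their diameter bound is a graph-diameter bound in $S'$, and since $S$ is a spanner of $S'$ (or of $G$), corresponding short paths exist in $S$ up to a $(1+s\epsilon)$ factor. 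Concatenating gives a $p_2$–$q_2$ path of length $w(e_1) + O(\epsilon\ell) \le (1 + O(\epsilon))w(e_2)$, which for $s$ large enough contradicts the greedy rule that would have rejected $e_2$. Hence at most one spanner edge of $E_i$ joins any ordered/unordered pair of $\epsilon$-clusters, and similarly no spanner edge in $E_i$ has both endpoints in the same $\epsilon$-cluster (same concatenation argument with the edge replaced by the intra-cluster path). So $\mathcal{K}(\mathcal{C}_\epsilon, E_i)$ is simple.

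For the degree bound, fix an $\epsilon$-cluster $C$ and consider all level-$i$ spanner edges of $E_i$ incident to it; by simplicity these go to distinct $\epsilon$-clusters $C_1, \dots, C_t$. Each such edge has length $\le \ell$, and each $C_k$ has diameter $O(\epsilon\ell)$, so picking one representative point $x_k \in C_k$ on the cluster closest to (an endpoint of $C$), all the $x_k$ lie in a ball of radius $O(\ell)$ around a fixed point of $C$. The key is a lower bound on pairwise distances $d(x_k, x_{k'})$: if two representatives were within distance, say, $\epsilon\ell$ (much less than $\ell/2$), then the two edges $C\!-\!C_k$ and $C\!-\!C_{k'}$ together with short intra-cluster paths and the short $x_k$–$x_{k'}$ path would again let the greedy algorithm reject the later of the two spanner edges — the same concatenation/contradiction as above. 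So the $x_k$ form an $\Omega(\epsilon\ell)$-separated set inside a ball of radius $O(\ell)$, and Lemma~\ref{lm:packing-dd} caps $t$ at $(O(\ell)/\Omega(\epsilon\ell))^d = \epsilon^{-O(d)}$, as desired.

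I expect the main obstacle to be the bookkeeping of constants and the precise routing of the "short path" used in every contradiction: one must be careful that the clusters' diameter guarantee (DC2, a bound in $S'$) translates into genuinely short paths available to the greedy algorithm at the moment it considers the later edge, and that summing $w(e_1) + (\text{intra-}C) + (\text{intra-}C') + (x_k\text{–}x_{k'})$ stays below $(1+s\epsilon)\,w(e_2)$ — this is where the choice of $s$ sufficiently large (relative to $g$ and the hidden constants), and $\epsilon$ sufficiently small, is pinned down. Everything else is a direct application of packing.
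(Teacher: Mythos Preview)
Your proposal is correct and follows essentially the same approach as the paper: simplicity via the ``short detour through the lighter edge plus intra-cluster paths contradicts greedy'' argument, and the degree bound via packing on cluster representatives whose pairwise $\Omega(\epsilon\ell)$-separation is forced by the very same greedy contradiction (this is the paper's Claim~\ref{clm:center-dist}). Two small refinements the paper makes that you may want to adopt: it fixes a single non-virtual \emph{center} per $\epsilon$-cluster rather than a $C$-dependent representative, and it dispatches your $S$-vs-$S'$ worry by observing that virtual clusters are isolated in $\mathcal{K}$ and that spanner-edge endpoints are non-virtual, so intra-cluster paths between them lift verbatim from $S'$ to $S$; also note $\ell_{i-1}=\epsilon\ell_i$, not $\epsilon\ell_i/2$, though this does not affect your $O(\epsilon\ell)$ bookkeeping.
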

\begin{proof}
We leave the details of the proof that $\mathcal{K}(\mathcal{C}_\epsilon, E_i)$ is simple to Appendix~\ref{app:ommitted}. To show the degree bound, first note that virtual clusters are isolated vertices in $\mathcal{K}(\mathcal{C}_\epsilon, E_i)$  since virtual vertices are subdividing vertices incident to edges of weight at most $w_0$. For each non-virtual $\epsilon$-cluster, we designate a non-virtual vertex to be its center. Assuming w.l.o.g.~that $\epsilon\le\frac 1 4$ and picking $s \geq 12g + 4$, we get the following claim:

\begin{claim} \label{clm:center-dist} Let $C_1,C_2, C_3$ be three $\epsilon$-clusters that have $x_1,x_2,x_3$ as centers. Suppose that $C_2,C_3$ are neighbors of $C_1$ in $\mathcal{K}$. Then, $d_G(x_i,x_j) > \epsilon\ell$ for any $1 \leq  i\not= j \leq 3$.
\end{claim}
\begin{proof}
 
Suppose $y_1y_3$ and $z_1z_2$ are two level-$i$ spanner edges such that $y_1,z_1 \in C_1, z_2\in C_2, y_3\in C_3$. We assume,~w.l.o.g, that  $w(y_1y_3) \leq w(z_1z_2)$. Recall $\ell/2  < w(y_1y_3), w(z_1z_2) \leq \ell$.

We only present the proof showing that $d_G(x_2,x_3) > \epsilon\ell$ since a similar but simpler proof holds for $d_G(x_1,x_2)$ and $d_G(x_1,x_3)$.  We assume that $d_G(x_2,x_3) \leq \epsilon\ell$. Let $Q$ be the $z_1$-to-$z_2$ path that consists of: (i) a shortest $z_1$-to-$y_1$ subpath  in $C_1$, (ii) edge $y_1y_3$, (iii) a shortest $y_3$-to-$x_3$ subpath  in $C_3$, (iv) a shortest $x_3$-to-$x_2$ path in $S$ and (v) a  shortest $x_2$-to-$z_2$ subpath in $C_2$. Since in the greedy spanner, edges are added by increasing weight, by the time $z_1z_2$ is added, $d_S(x_2,x_3) \leq (1+s\epsilon)d_G(x_2,x_3) < (1+s\epsilon) \epsilon\ell$. Thus, we have:
\begin{equation*}
\begin{split}
w(Q) &\leq g\epsilon \ell + w(y_1y_3) + g\epsilon\ell + (1+s \epsilon)\epsilon\ell + g\epsilon \ell\\
&\leq (3g + 1+s\epsilon) \epsilon\ell + w(y_1y_3) \\
&\leq 2(3g + 1+s\epsilon) \epsilon w(z_1z_2) + w(z_1z_2) \\
&= \left((6g + 2 + 2s\epsilon)\epsilon + 1\right) w(z_1z_2) 
\end{split}
\end{equation*}
Since $\epsilon\le\frac 1 4$, we have $s \geq 12g + 4\geq (6g + 2)/(1-2\epsilon)$ and hence $6g + 2 + 2s\epsilon\le s$. But then $w(Q) \leq (1 + s\cdot\epsilon)w(z_1z_2)$; contradicting that $z_1z_2$ is a spanner edge.
\end{proof}

Let $C_0$ be an $\epsilon$-cluster with neighbors $C_1, C_2, \ldots, C_p$ in $\mathcal{K}$. Let $X = \{x_0,x_1,\ldots, x_p\}$ where $x_i$ is the center of $C_i, 0 \leq i \leq p$. We show below that $d_G(x_0,x_i) \leq 3\ell$ for every $1 \leq i \leq p$ when $\epsilon$ is sufficiently small. Thus, $X$ is contained in a ball centered at $x_0$ of radius at most $3\ell$. By Claim~\ref{clm:center-dist}, $d_G(x_i, x_j) > \epsilon \ell$ for every $0 \leq  i \leq p$. Thus, by Lemma~\ref{lm:packing-dd}, $|X| \leq \ed$.

We now show that $d_G(x_0,x_i) \leq 3\ell$ for a fixed $i$ with $1\le i\le p$. Let $y_0y_i$ be the spanner edge in $E_i$ connecting $C_0$ and $C_i$ in $\mathcal{K}$. Then the $x_0$-to-$x_i$ path $P$ consisting of an $x_0$-to-$y_0$ shortest path in $C_0$, edge $y_0y_i$ and a $y_i$-to-$x_i$ shortest path in $C_i$ has length at most $2g\epsilon \ell + \ell \leq 3\ell$ when $\epsilon$ is smaller than $1/g$. 
\end{proof}

Let $\mathcal{T}$ be a tree of $\epsilon$-clusters connected by $\mst$ edges. We say an $\epsilon$-cluster of $\mathcal{T}$ is \emph{branching} if it is incident to at least three $\mst$ edges in $\mathcal{T}$. Let $\mathcal{P}$ be a path of $\mathcal{T}$. We define the diameter of $\mathcal{P}$, denoted by $\dm(\mathcal{P})$, to be the diameter of the subgraph of $S'$ formed by edges inside $\epsilon$-clusters  and $\mst$ edges connecting $\epsilon$-clusters of $\mathcal{P}$.  We define \emph{effective diameter} of $\mathcal{P}$, denoted by $\edm(\mathcal{P})$, to be the diameters of $\epsilon$-clusters in $\mathcal{P}$. Since $\epsilon$-clusters have diameter at least $w_0$ (by construction of the base case) which is at least the weight of edges connecting them in $\mathcal{P}$, we have:
\begin{observation} \label{obs:dm-vs-edm}$\dm(\mathcal{P}) \leq 2\edm(\mathcal{P})$.
\end{observation}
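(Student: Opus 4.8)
The plan is to prove the inequality by bounding the $S'$-distance between an arbitrary pair of vertices of the subgraph at stake and showing that this bound never exceeds $2\edm(\mathcal{P})$. Write the $\epsilon$-clusters of $\mathcal{P}$, in the order they appear along the path, as $C_1,\dots,C_m$, and let $e_1,\dots,e_{m-1}$ be the $\mst$ edges of $S'$ joining consecutive clusters, so that by definition $\edm(\mathcal{P})=\sum_{t=1}^m \dm(C_t)$. Let $G_{\mathcal{P}}$ be the subgraph of $S'$ on the internal edges of the $C_t$ together with the $e_t$; every vertex of $G_{\mathcal{P}}$ lies in some $C_t$, since the only edges of $G_{\mathcal{P}}$ are cluster-internal edges or connecting $\mst$ edges, and the endpoints of an $e_t$ are cluster vertices by construction. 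So it suffices to bound $d_{G_{\mathcal{P}}}(u,v)$ for $u\in C_a$ and $v\in C_b$ with $a\le b$.

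First I would exhibit an explicit $u$-to-$v$ walk in $G_{\mathcal{P}}$: inside $C_a$, a shortest path from $u$ to the endpoint of $e_a$ in $C_a$; then the edge $e_a$; then, for each intermediate index $a<t<b$, a shortest path inside $C_t$ from the endpoint of $e_{t-1}$ in $C_t$ to the endpoint of $e_t$ in $C_t$, followed by $e_t$; and finally inside $C_b$ a shortest path from the endpoint of $e_{b-1}$ in $C_b$ to $v$. Since each $\epsilon$-cluster is a connected subgraph of $S'$, each of the within-cluster pieces has length at most $\dm(C_t)$ for the corresponding $t$, so the within-cluster part of the walk contributes at most $\sum_{t=a}^{b}\dm(C_t)$, while the connecting edges contribute $\sum_{t=a}^{b-1}w(e_t)$. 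Hence $d_{G_{\mathcal{P}}}(u,v)\le \sum_{t=a}^{b}\dm(C_t)+\sum_{t=a}^{b-1}w(e_t)$.

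The key step is to absorb the connecting-edge weights into the cluster-diameter sum. Every $\mst$ edge of $S'$ has weight at most $w_0$ (by the subdivision step of Section~\ref{sec:credits}, or rather the credit-assignment section), and every $\epsilon$-cluster has diameter at least $w_0$ by the base-case construction (a level-$0$ cluster has diameter at least $\ell_0\ge w_0$, and every higher-level cluster contains a level-$0$ cluster as a subgraph); therefore $w(e_t)\le w_0\le \dm(C_t)$ for each $t$, and $\sum_{t=a}^{b-1}w(e_t)\le \sum_{t=a}^{b-1}\dm(C_t)$. Combining the two bounds gives $d_{G_{\mathcal{P}}}(u,v)\le 2\sum_{t=a}^{b}\dm(C_t)\le 2\sum_{t=1}^m\dm(C_t)=2\edm(\mathcal{P})$, and taking the maximum over all vertex pairs $u,v$ yields $\dm(\mathcal{P})\le 2\edm(\mathcal{P})$. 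I do not expect any real obstacle here: the argument is essentially immediate once one records that a connecting $\mst$ edge is never longer than the diameter of a cluster it touches. The only points worth a word of care are that $\edm(\mathcal{P})$ is read as the \emph{sum} (not the maximum) of the cluster diameters, as it must be for the inequality to carry content, and that the within-cluster shortest paths used in the walk stay inside the respective clusters and hence inside $G_{\mathcal{P}}$.
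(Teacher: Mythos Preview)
Your proposal is correct and follows exactly the paper's own (one-line) justification: the paper simply notes, immediately before the observation, that ``$\epsilon$-clusters have diameter at least $w_0$ (by construction of the base case) which is at least the weight of edges connecting them in $\mathcal{P}$,'' and the observation is stated without further proof. You have just unpacked this into an explicit walk argument, using the same key fact $w(e_t)\le w_0\le \dm(C_t)$ to absorb the connecting-edge weights into the cluster-diameter sum.
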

We define the effective diameter of a subtree $\mathcal{T}'$ of $\mathcal{T}$ to be the effective diameter of the diameter path of $\mathcal{T}'$. We construct clusters in four phases:

\paragraph{Phase 1: Branching $\epsilon$-clusters.} We have two steps. Since our construction is recursive, we update the set branching vertices of $\mathcal{T}$ after each recursive step. (Step 1) Let $\mathcal{T}'$ be a subtree of $\mathcal{T}$ of effective diameter at least $\ell$ and at most $2\ell$ that contains a branching vertex $X$ and its neighbors so that $X$ is still branching in $\mathcal{T}'$.  We group $\epsilon$-clusters and $\mst$ edges of $\mathcal{T}'$ as a new level-$i$ cluster. We remove $\mathcal{T}'$ from $\mathcal{T}$ and repeat until every component of $\mathcal{T}$ either has effective diameter less than $\ell$ or is a path of $\epsilon$-clusters, called a \emph{cluster path}, of effective diameter at least $\ell$. 

(Step 2) Let $\mathcal{P}$ be a cluster path of diameter at least $\ell$. Let $X$ be the set of internal $\epsilon$-clusters of $\mathcal{P}$ such that $X$ has at least one $\mst$ edge, say $e$, to a level-$i$ cluster, say $C$, formed in Step 1. Observe that $X$ is branching before execution of Step 1 and the removal of subtrees of $\mathcal{T}$ in Step 1 reduces degree of $X$ to 2. We augment $C$ with $X$ and $e$. We then remove $X$ from $\mathcal{P}$ and repeat until every cluster path of effective diameter at least $\ell$ only has $\mst$ edges to level-$i$ clusters incident to its endpoint $\epsilon$-clusters. 

\paragraph{Phase 2:  $\epsilon$-clusters in high diameter paths.} Let $e$ be a spanner edge in $E_i$ whose endpoints, $x$ and $y$ are in high-diameter cluster paths, $\mathcal{P}$ and $\mathcal{Q}$, respectively, where it may be that $\mathcal{P} = \mathcal{Q}$.  Let $C_x$ and $C_y$ be the $\epsilon$-clusters containing $x$ and $y$, respectively.  We only proceed with this phase if the two affix cluster subpaths of $\mathcal{P}$ ending at $C_x$ have effective diameter at least $\ell$ (likewise for $\mathcal{Q}$).  Let $\mathcal{P}_1$  and $\mathcal{P}_2$ be the two minimal subpaths of $\mathcal{P}$ ending at $C_x$ that have effective diameter at least $\ell$.  Likewise define $\mathcal{Q}_1$ and $\mathcal{Q}_2$.  We group $\epsilon$-clusters and $\mst$ edges of $\mathcal{P}_1\cup \mathcal{Q}_1 \cup \mathcal{P}_2 \cup \mathcal{Q}_2$ and $e$ as a new level-$i$ cluster.  See Figure~\ref{fig:P2} for an illustration of the different forms this cluster can take. 
\begin{figure}
\centering
\includegraphics[scale = 1]{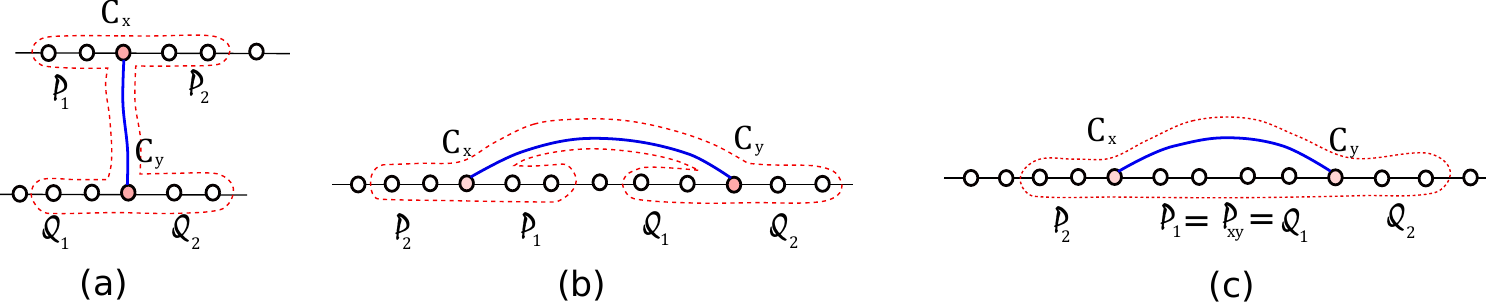}
\caption{Three different forms that a cluster (enclosed in the dotted red curves) in Phase 2 can take. The solid blue line is the spanner edge $e$. (a) $e$ connects $\epsilon$-clusters in different clusters paths, (b) $e$ connects $\epsilon$-cluster in the same path and $\mathcal{P}_1$  and $\mathcal{Q}_1$ are disjoint and (c) $e$ connects $\epsilon$-cluster in the same path and $\mathcal{P}_1$  and $\mathcal{Q}_1$ are overlapped. In case (c), we redefine $\mathcal{P}_1 = \mathcal{Q}_1 = \mathcal{P}_{xy}$.}
\label{fig:P2}
\end{figure}

\paragraph{Phase 3: Low diameter components.} Let $\mathcal{F}$ be the set of trees (and paths) remaining of effective diameter at most $\ell$. By construction, each component $\mathcal{T}'$ of $\mathcal{F}$ has a $\mst$ edge, say $e$, to a level-$i$ cluster constructed in previous phases, say $C$. We augment $C$ by $\mathcal{T}'$ and $e$.

\paragraph{Phase 4: Remaining high diameter paths.} Let $\mathcal{P}$ be a cluster path of effective diameter at least $\ell$. We greedily break $\mathcal{P}$ into subpaths of effective diameter at least $\ell$ and at most $2\ell$. If any affix of $\mathcal{P}$, say $\mathcal{P}'$, has a $\mst$ edge, say $e$, to a level-$i$ cluster constructed in previous phases, say $C$,  we augment $C$ with $\mathcal{P}'$ and $e$. We then  make each remaining cluster subpath of $\mathcal{P}$ into a new level-$i$ cluster. 

This completes the cluster construction for level $i$.

\subsection{Showing diameter-credit invariant DC2}
By construction, each level-$i$ cluster constructed in Phase 4 is a cluster path of effective diameter at most $2\ell$. By Observation~\ref{obs:dm-vs-edm}, we have:
\begin{claim} \label{clm:DC2-P4} Level-$i$ clusters constructed in Phase 4 have diameter at most $4\ell$.
\end{claim}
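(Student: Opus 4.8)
The plan is to unwind the definitions and reduce the whole claim to Observation~\ref{obs:dm-vs-edm}. First I would record what Phase~4 actually produces: every new level-$i$ cluster $C$ is obtained by greedily chopping a remaining high-diameter cluster path $\mathcal{P}$ into consecutive subpaths of effective diameter at least $\ell$ and at most $2\ell$, and pieces that do not reach effective diameter $\ell$ are not turned into new clusters but merged (via the "augment" rule) into a cluster built in an earlier phase. So $C$ is itself a cluster path $\mathcal{P}'$ with $\edm(\mathcal{P}')\le 2\ell$.

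Second, I would justify that this greedy chopping is always legal, i.e.\ that it never forces a stray subpath of effective diameter below $\ell$ to become its own cluster. This is where the level-$(i-1)$ invariant DC2 enters: each $\epsilon$-cluster has diameter at most $g\ell_{i-1}=g\epsilon\ell$, which for $\epsilon<1/g$ is strictly less than $\ell$. Hence, walking along $\mathcal{P}$ and accumulating $\epsilon$-clusters one at a time, the effective diameter grows in steps of size $<\ell$, so the first time it reaches $\ell$ it is still at most $\ell+g\epsilon\ell\le 2\ell$; we cut there and recurse on the remainder. The only leftover that can be below $\ell$ is an affix of $\mathcal{P}$, and by Phase~1 (Step~2) — which removed every internal $\epsilon$-cluster of $\mathcal{P}$ that had an $\mst$ edge to an earlier-phase level-$i$ cluster — such $\mst$ edges survive only at the two endpoint $\epsilon$-clusters of $\mathcal{P}$, so that affix is absorbed by the augment clause rather than becoming a new cluster. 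Thus every new Phase-4 cluster has effective diameter at most $2\ell$.

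Third, the conclusion is one line: by definition $\dm(C)=\dm(\mathcal{P}')$, and Observation~\ref{obs:dm-vs-edm} gives $\dm(\mathcal{P}')\le 2\,\edm(\mathcal{P}')\le 2\cdot 2\ell = 4\ell$, which is exactly the claim.

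I do not expect a genuine obstacle here: the substance of the claim lives entirely in Observation~\ref{obs:dm-vs-edm}, which is just the remark that $\epsilon$-clusters have diameter at least $w_0$, and hence at least the weight of the $\mst$ edges linking consecutive $\epsilon$-clusters along a path. The only point requiring a sentence of care is the greedy-breaking step above, making sure no under-sized piece is left to become a standalone cluster — and that is precisely what the endpoint-only structure from Phase~1 together with the Phase~4 augmentation clause guarantees.
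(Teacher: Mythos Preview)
Your proposal is correct and follows exactly the paper's approach: a Phase~4 cluster is a cluster subpath of effective diameter at most $2\ell$, and Observation~\ref{obs:dm-vs-edm} doubles this to a diameter bound of $4\ell$. The paper's proof is in fact a single sentence (``By construction\ldots\ By Observation~\ref{obs:dm-vs-edm}''), so your step~2 is additional detail the paper omits; note, however, that your absorption argument there is slightly off---the augment clause in Phase~4 only fires for an affix that \emph{has} an $\mst$ edge to an earlier-phase cluster, which need not be the same affix that falls short of effective diameter $\ell$---but this does not matter for the claim, since the claim concerns only the new clusters Phase~4 actually outputs, and those are specified to have effective diameter in $[\ell,2\ell]$.
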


\begin{claim} \label{clm:dm-Cp}  Level-$i$ clusters have diameter at most $33\ell $ when $\epsilon$ is smaller than $\frac{1}{g}$. 
\end{claim}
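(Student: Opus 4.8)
The plan is to bound the diameter of a level-$i$ cluster by tracking, phase by phase, how much a cluster can grow relative to the base quantity $\ell = \ell_i$, and then collecting all the contributions into a single constant. Each level-$i$ cluster is built from a "core" produced in Phase 1 or Phase 2, which is then possibly augmented in Phases 1 (Step 2), 3, and 4. So the structure of the argument is: (i) bound the effective diameter of the core; (ii) bound the total effective diameter added by each later augmentation step; (iii) sum these and convert effective diameter to diameter via Observation~\ref{obs:dm-vs-edm}.

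For step (i), a Phase-1 core is a subtree $\mathcal{T}'$ of effective diameter at most $2\ell$, so by Observation~\ref{obs:dm-vs-edm} its diameter is at most $4\ell$. A Phase-2 core is $\mathcal{P}_1\cup\mathcal{Q}_1\cup\mathcal{P}_2\cup\mathcal{Q}_2$ joined by the spanner edge $e$. Each $\mathcal{P}_k,\mathcal{Q}_k$ is a \emph{minimal} subpath of effective diameter at least $\ell$, so — using that consecutive $\epsilon$-clusters have diameter at most $g\ell_{i-1}=g\epsilon\ell$ by DC2 at level $i-1$, which is negligible for $\epsilon<1/g$ — each such subpath has effective diameter at most $\ell+g\epsilon\ell\le 2\ell$, hence diameter at most $4\ell$. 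Walking from one end of $\mathcal{P}_2$ through $C_x$, across $e$ (length $\le\ell$), through $C_y$, and out to the end of $\mathcal{Q}_2$ gives diameter at most $4\ell+4\ell+\ell+4\ell+4\ell\le 17\ell$; the overlapping case (c) of Figure~\ref{fig:P2} is only smaller. For step (ii): in Phase~1 Step~2 we attach internal $\epsilon$-clusters $X$ via a single $\mst$ edge, but $X$ lies on a cluster path whose relevant augmenting portion has effective diameter at most $O(\ell)$; in Phase~3 we attach a component $\mathcal{T}'$ of effective diameter at most $\ell$ via one $\mst$ edge, contributing at most $2\ell+\epsilon\ell$ to the radius; in Phase~4 we attach an affix $\mathcal{P}'$ of effective diameter at most $2\ell$ via one $\mst$ edge. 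Each augmentation hangs off the existing cluster, so it adds at most roughly $4\ell$ to the diameter; collecting the core bound of $17\ell$ with a bounded number of such $O(\ell)$ augmentations yields the constant $33$ (with room to spare), provided $\epsilon<1/g$ so all the $g\epsilon\ell$ error terms are absorbed.

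The main obstacle — and the step requiring the most care — is verifying that only a \emph{bounded} number of augmentation steps can touch a single level-$i$ cluster, and with bounded size each. A cluster could in principle have many low-diameter components (Phase 3) or many affixes attached to it; one must argue from the construction that each level-$i$ cluster receives $O(1)$ augmentations of effective diameter $O(\ell)$, rather than an unbounded number. This follows because each augmentation in Phases~1(Step~2), 3, and 4 attaches to a cluster through an $\mst$ edge incident to one of the finitely many "boundary" $\epsilon$-clusters of the core (the branching cluster and its neighbors in Phase~1, or the two endpoints $C_x,C_y$ of the affix subpaths in Phase~2), and each boundary $\epsilon$-cluster has bounded $\mst$-degree; moreover each attached piece has effective diameter at most $2\ell$ by construction. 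Once this finiteness is pinned down, the diameter bound is just the sum $4\ell$ (or $17\ell$) $+\,O(1)\cdot 4\ell$, and one checks the constants work out to at most $33\ell$.
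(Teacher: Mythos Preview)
Your phase-by-phase outline and your bounds on the Phase~1 and Phase~2 cores are essentially what the paper does, and your numbers ($4\ell$ and $17\ell$) match. The gap is in how you handle the augmentations.

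You identify the ``main obstacle'' as showing that only $O(1)$ augmenting pieces are attached to a given core, and you argue this by saying attachments only happen at finitely many boundary $\epsilon$-clusters of bounded $\mst$-degree. That claim is not correct: in Phase~3 a low-diameter component $\mathcal{T}'$ may attach via an $\mst$ edge to \emph{any} $\epsilon$-cluster of $C$, not just to $C_x$, $C_y$, or a designated branching cluster, and a Phase~1 subtree of effective diameter $2\ell$ can have many leaves, each a potential attachment point. So the number of attached pieces is not bounded by a constant, and your final summation ``$17\ell + O(1)\cdot 4\ell$'' is unjustified.

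The fix is that you do not need to bound the number of attachments at all. Since the core $C$ is connected and every Phase-3 (resp.\ Phase-4) piece hangs off $C$ (resp.\ $C'$) by a single $\mst$ edge, any diameter-realizing path in the augmented cluster enters and leaves the core at most once and therefore passes through at most \emph{two} attached pieces. Hence $\dm(C') \le \dm(C) + 2\cdot(2\ell) + 2w_0$ and $\dm(C'') \le \dm(C') + 2\cdot(4\ell) + 2w_0$, regardless of how many pieces are attached. Combining with $\dm(C)\le 17\ell$ and $w_0\le\ell$ gives $\dm(C'')\le 33\ell$. This is exactly the inequality the paper writes down; once you replace your ``bounded number of augmentations'' step with this two-ends observation, the proof goes through.
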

\begin{proof}
Let $C$ be a level-$i$ cluster that is initially formed in Phase 1 or 2.  By construction, $C$ may be augmented in Phases 3 and 4. Let $C'$ and $C''$ be the augmented clusters of $C$ after Phase 3 and Phase 4, respectively. It could be that $C = C' = C''$. $C'$ is obtained from $C$ by attaching trees of effective diameter at most $\ell$ via $\mst$ edges. $C''$ is obtained from $C'$ by attaching trees of effective diameter at most $2\ell$ via $\mst$ edges. Recall each $\mst$ edge has length at most $w_0$. By Observation~\ref{obs:dm-vs-edm}, we have:
\begin{equation} \label{eq:C-vs-Cp-vs-Cpp}
\dm(C') \leq \dm(C) + 4\ell + 2w_0 \qquad \mbox{and} \qquad \dm(C'') \leq \dm(C') + 8\ell+2w_0
\end{equation}

If $C$ is constructed in Phase 1, by Observation~\ref{obs:dm-vs-edm}, after Step 1, $\dm(C) \leq 4\ell$. Since in Step 2, $C$ is augmented by $\epsilon$-clusters via $\mst$ edges, after Step 2, $\dm(C) \leq 4\ell + 2w_0 + 2g\epsilon\ell \leq 8\ell$ ($\ell \geq w_0$ by construction of the base case). If $C$ is constructed in Phase 2, we have:
\begin{equation*}
\dm(C) \leq \dm(\mathcal{P}_1) + \dm(\mathcal{P}_2) + \dm(\mathcal{Q}_1) + \dm(\mathcal{Q}_2) + \ell(e)
\end{equation*}
Since $\mathcal{P}_1, \mathcal{P}_2, \mathcal{Q}_1, \mathcal{Q}_2$ are minimal, each has effective diameter at most $\ell + g\epsilon\ell$. Thus, $\dm(C) \leq 4(2\ell + 2g\epsilon\ell) + \ell = 9\ell + 9g\epsilon\ell \leq 17\ell$. Thus, in both cases, $\dm(C) \leq 17\ell$. By Equation~\eqref{eq:C-vs-Cp-vs-Cpp}, $\dm(C'') \leq 29\ell + 4w_0 \leq 33\ell$.
\end{proof}

Thus, by Claim~\ref{clm:dm-Cp}, we can choose $g = 33$.

\subsection{Showing diameter-credit invariant DC1}

Let $\dk$ be the maximum degree of the cluster graph $\mathcal{K}$. By Lemma~\ref{lm:K-structure}, $\dk = \ed$. We define $cr(\mathcal{X})$ to be the total credits of a set of $\epsilon$-clusters $\mathcal{X}$. 
\subsubsection{Clusters originating in Phase 4}

Let $C$ be a level-$i$ cluster formed in Phase 4. We call $C$  a \emph{long cluster} if it has at least $\frac{2g}{\epsilon} + 1$ $\epsilon$-clusters and a \emph{short cluster} otherwise. We have:
\begin{claim} \label{clm:DC1-long-cluster}
A long cluster can both maintain invariant DC1 and pay for its incident spanner edges when $\ce = \ed$.
\end{claim}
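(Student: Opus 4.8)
The plan is to exploit the defining feature of a long cluster $C$: it is assembled from so many $\epsilon$-clusters that the credits handed up by the inductive hypothesis DC1 at level $i-1$ already pay for everything level $i$ needs of $C$, with a multiplicative slack that is exactly what forces the choice $\ce=\ed$. First I would lower-bound the credit pool of $C$. Since $C$ originates in Phase~4 it is a cluster path built from some number $t\ge\frac{2g}{\epsilon}+1$ of $\epsilon$-clusters, and each $\epsilon$-cluster, being a level-$(i-1)$ cluster, carries at least $\ce\cdot\frac{\ell_{i-1}}{2}=\ce\cdot\frac{\epsilon\ell}{2}$ credits by DC1 (using $\ell_{i-1}=\epsilon\ell$). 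Discarding the credits of the $\mst$ edges joining these $\epsilon$-clusters, which we do not need here, the pool of $C$ is at least $t\cdot\ce\cdot\frac{\epsilon\ell}{2}$.

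Next I would tally what this pool must cover. Because $C$ originates in Phase~4 it is never augmented afterwards, so Claim~\ref{clm:DC2-P4} gives $\dm(C)\le 4\ell$, whence DC1 at level $i$ asks for at most $\ce\cdot\max\{\dm(C),\ell/2\}\le 4\ce\ell$ credits. For the spanner edges, Lemma~\ref{lm:K-structure} bounds the degree of every $\epsilon$-cluster in $\mathcal{K}(\mathcal{C}_\epsilon,E_i)$ by $\dk$, so the level-$i$ spanner edges incident to the $t$ $\epsilon$-clusters of $C$ number at most $t\dk$, each of weight at most $\ell$, costing at most $t\dk\ell$ credits to pay for. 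It thus suffices to verify
\begin{equation*}
t\cdot\ce\cdot\frac{\epsilon\ell}{2} \ge 4\ce\ell + t\dk\ell .
\end{equation*}
Dividing by $\ell$ yields $t\bigl(\ce\tfrac{\epsilon}{2}-\dk\bigr)\ge 4\ce$; once $\ce\tfrac{\epsilon}{2}>\dk$ the left side increases with $t$, so it is enough to check it at $t=\frac{2g}{\epsilon}$, which simplifies to $(g-4)\ce\ge\frac{2g}{\epsilon}\dk$. With $g=33$ (Claim~\ref{clm:dm-Cp}) and $\dk=\ed$ (Lemma~\ref{lm:K-structure}), taking $\ce=\frac{2g}{(g-4)\epsilon}\dk=\ed$ makes this hold, so $C$ retains at least $4\ce\ell\ge\ce\max\{\dm(C),\ell/2\}$ credits after paying its incident spanner edges — that is, DC1 and the payment both hold.

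I do not expect a genuine obstacle inside this claim; the only care needed is accounting. One must confirm that the DC1 credits and the credits spent on spanner edges are drawn from disjoint parts of the pool — which is precisely the content of the displayed inequality — that a Phase-4 cluster is indeed never augmented so that $\dm(C)\le 4\ell$ applies, and that charging each incident spanner edge to $C$ (and possibly also to the cluster at its other endpoint) inflates the total bill by only a bounded factor. The substantive work has already been done upstream, in the packing-based degree bound $\dk=\ed$ of Lemma~\ref{lm:K-structure}.
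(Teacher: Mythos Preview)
Your proof is correct and takes essentially the same approach as the paper's: both lower-bound the available credit via DC1 at level $i-1$ (each $\epsilon$-cluster contributing at least $\ce\epsilon\ell/2$), bound the diameter via Claim~\ref{clm:DC2-P4} and the number of incident level-$i$ edges via the degree bound $\dk=\ed$ of Lemma~\ref{lm:K-structure}, and conclude that $\ce=\ed$ suffices. The only cosmetic difference is that the paper partitions the $\epsilon$-clusters explicitly into a set $\mathcal{X}$ of size $\tfrac{2g}{\epsilon}$ (whose credits fund DC1), one extra cluster $Y$ (whose credits pay for edges incident to $\mathcal{X}\cup\{Y\}$), and the remainder (each paying for its own incident edges), whereas you pool everything into a single inequality.
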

\begin{proof}
Let $\mathcal{X}$ be a set of any $\frac{2g}{\epsilon}$ $\epsilon$-clusters of $C$. By invariant DC1 for level $i-1$, we have:

\begin{equation*}
cr(\mathcal{X}) \geq  \frac{2g}{\epsilon} \ce\ell/2 = \ce g\ell 
\end{equation*}
which is at least $\ce \cdot\max(\dm(C), \ell/2)$ since $\dm(C) \leq g\ell$ as shown in Claim~\ref{clm:DC2-P4} (since $g = 33$).  Thus, credits of $\mathcal{X}$ are enough to maintain DC1 for $C$. 

Since $C$ is a long cluster, there is at least one $\epsilon$-cluster, say $Y$, not in $\mathcal{X}$. By DC1 for level $i-1$, $Y$ has at least $\ce\ell/2$ credits. Since there are at most:
\begin{equation*}
\dk\cdot \left(\frac{2g}{\epsilon} + 1\right) = \ed
\end{equation*}
level-$i$ spanner edges incident to $\epsilon$-clusters in $\mathcal{X}\cup\{Y\}$, $Y$'s credits are enough to pay for those spanner edges when $\ce = \ed$.  

For each $\epsilon$-cluster $z \in C\setminus (\mathcal{X}\cup\{Y\})$, we use $z$'s credit to pay for the spanner edges incident to $z$. By Lemma~\ref{lm:K-structure} and invariant DC1 for level $i-1$, this amount of credit is sufficient when $\ce = \ed$.
\end{proof}

\begin{claim} \label{clm:DC1-int-short} The credits of $\epsilon$-clusters and $\mst$ edges connecting $\epsilon$-clusters of each short cluster $C$ are enough to maintain invariant DC1 for $C$.
\end{claim}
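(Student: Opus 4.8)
The plan is to use the fact that a short cluster $C$ originating in Phase~4 is simply a cluster subpath with no spanner edge in its interior: it has the form $C_1\cup\{e_1\}\cup C_2\cup\{e_2\}\cup\cdots\cup\{e_{t-1}\}\cup C_t$, where $C_1,\dots,C_t$ are $\epsilon$-clusters and $e_1,\dots,e_{t-1}$ are $\mst$ edges of $S'$ (each of weight at most $w_0$), and where the effective diameter $\edm(C)=\sum_{j=1}^{t}\dm(C_j)$ is at least $\ell$ since Phase~4 only turns cluster subpaths of effective diameter at least $\ell$ into new clusters.

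First I would lower-bound the credit at our disposal. Recalling that $\ell_{i-1}=\epsilon\ell$, invariant DC1 at level $i-1$ gives each $\epsilon$-cluster $C_j$ at least $\ce\cdot\dm(C_j)$ credits, and by construction each $\mst$ edge $e_j$ of $S'$ has $\ce w_0\ge \ce\cdot w(e_j)$ credits. Hence the total credit of the $\epsilon$-clusters and the $\mst$ edges of $C$ is at least $\ce\big(\sum_{j=1}^{t}\dm(C_j)+\sum_{j=1}^{t-1}w(e_j)\big)$.

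It then remains to check this against what DC1 demands for $C$, namely $\ce\cdot\max\{\dm(C),\ell/2\}$. On the one hand, any shortest path inside $C$ stays within a contiguous block of the $\epsilon$-clusters, crossing each of them along at most its diameter and each connecting $\mst$ edge at most once, so $\dm(C)\le \sum_{j=1}^{t}\dm(C_j)+\sum_{j=1}^{t-1}w(e_j)$, and therefore the available credit is at least $\ce\cdot\dm(C)$. On the other hand, discarding the $\mst$-edge terms, the available credit is at least $\ce\sum_{j=1}^{t}\dm(C_j)=\ce\cdot\edm(C)\ge \ce\ell\ge \ce\ell/2$. Combining the two bounds, the available credit is at least $\ce\cdot\max\{\dm(C),\ell/2\}$, which is exactly invariant DC1 for $C$.

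There is essentially no obstacle here, as every estimate is elementary; the only points requiring care are that a Phase-4 short cluster genuinely contains no level-$i$ spanner edge in its interior (so that the per-$\mst$-edge credit of $\ce w_0$ covers all the connecting edges and $\dm(C)$ can be bounded purely via the $\dm(C_j)$'s and the $w(e_j)$'s), and that this claim only re-establishes DC1 --- unlike Claim~\ref{clm:DC1-long-cluster}, it does not assert that a short cluster can also pay for its incident level-$i$ spanner edges, which is deferred to the subsequent accounting.
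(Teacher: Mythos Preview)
Your proof is correct and follows essentially the same approach as the paper: bound $\dm(C)$ by the sum of the $\epsilon$-cluster diameters plus the connecting $\mst$ edge weights, then use DC1 at level $i-1$ together with the per-$\mst$-edge credit to cover it. You are in fact slightly more careful than the paper, which leaves the $\ell/2$ side of $\max\{\dm(C),\ell/2\}$ implicit; your observation that $\edm(C)\ge\ell$ makes this explicit.
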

\begin{proof}
We abuse notation by letting $\mst(C)$ be the set of $\mst$ edges in $C$ that connects  its $\epsilon$-clusters. Since $C$ is a cluster path, we have: 
\begin{equation*}
\begin{split}
\dm(C) &\leq \sum_{X_\epsilon \in C}\dm(X_\epsilon) +  \sum_{e \in \mst(C)} w(e)\\
\end{split}
\end{equation*}
By invariant DC1 for level $i-1$, $cr(X_\epsilon) \geq \ce \cdot \dm(X_\epsilon)$ and since each MST edge has credit at least $\ce$ times its length, the claim follows.
\end{proof}

A short cluster may need to use all the credits of $\epsilon$-clusters and $\mst$ edges to maintain DC1, hence, it many not have extra credit to pay for any incident level-$i$ spanner edges. In this case, we need to use credits of other level-$i$ clusters to pay for those spanner edges. We call a short cluster \emph{internal} if it is not an affix of a long path $\mathcal{P}$ in Phase 4. 
\begin{observation} \label{obs:edge-int-short}
There is no level-$i$ spanner edge $e$ that has both endpoints in internally short clusters. 
\end{observation}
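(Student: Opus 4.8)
The plan is to argue by contradiction: suppose $e\in E_i$ is a level-$i$ spanner edge with both endpoints lying in internally short clusters $C$ and $C'$ formed in Phase~4 (possibly $C=C'$). Recall that a short cluster is a cluster-subpath produced by the greedy breaking of a high-diameter cluster path $\mathcal{P}$ in Phase~4, and ``internal'' means it is not one of the two affixes of $\mathcal{P}$. The key observation is that an internal short cluster $C\subseteq\mathcal{P}$, together with the two neighboring subpaths of $\mathcal{P}$ on either side of it, gives exactly the kind of configuration that Phase~2 is designed to eat: since $C$ is internal, both of the affix cluster-subpaths of $\mathcal{P}$ ending at the $\epsilon$-cluster $C_x$ containing the endpoint $x$ of $e$ have effective diameter at least $\ell$ (the subpath on the far side from the affix certainly does, because the greedy breaking only stops a subpath once its effective diameter is in $[\ell,2\ell]$, and the affix side also survives as a high-diameter path since $C$ is not an affix). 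The same holds at the $C'$ end. Hence the edge $e$ would have triggered Phase~2, contradicting the fact that $e$ only gets assigned to a Phase-4 cluster because it was \emph{not} consumed in Phase~2.

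First I would make precise which paths $\mathcal{P},\mathcal{Q}$ the endpoints of $e$ lie in at the start of Phase~4: since $C$ and $C'$ are Phase-4 clusters, each is a subpath of some high-diameter cluster path $\mathcal{P}$ (resp.\ $\mathcal{Q}$) that survived Phases~1--3. Next I would verify the Phase~2 precondition at $C_x$: the two minimal subpaths $\mathcal{P}_1,\mathcal{P}_2$ of $\mathcal{P}$ ending at $C_x$ with effective diameter $\ge\ell$ both exist. On one side this is immediate because $C$ itself is internal, so there is a nonempty continuation of $\mathcal{P}$ past $C$ toward the (non-affix) direction containing enough $\epsilon$-clusters; on the other side one uses that $C$ is not the affix, so the portion of $\mathcal{P}$ on that side also has effective diameter $\ge\ell$ (it contains at least one more full short cluster, or an affix of effective diameter $\ge\ell$). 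Then the identical argument applies at $C_y\subseteq\mathcal{Q}$. This shows the Phase-2 precondition ``both affix cluster subpaths ending at $C_x$ (resp.\ $C_y$) have effective diameter at least $\ell$'' is met, so $e$ must have been processed in Phase~2 and its endpoints absorbed into a Phase-2 cluster — so they are not in Phase-4 clusters at all, the desired contradiction.

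I expect the main obstacle to be the bookkeeping around the word ``internal'' and the precise order in which Phase~4 breaks a path: I need to be careful that ``$C$ is not an affix of $\mathcal{P}$'' really does force the existence of a subpath of effective diameter $\ge\ell$ on \emph{both} sides of $C_x$ within $\mathcal{P}$, including the subtle case $\mathcal{P}=\mathcal{Q}$ where $C$ and $C'$ sit on the same original path and the two required subpaths might overlap (this is exactly case~(c) of Figure~\ref{fig:P2}, where one sets $\mathcal{P}_1=\mathcal{Q}_1=\mathcal{P}_{xy}$, and one must check that the overlapping subpath still has effective diameter $\ge\ell$, which follows since it spans from $C_x$ to $C_y$ through at least the internal short cluster(s) between them). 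A secondary subtlety is ensuring we are comparing the state of $\mathcal{P}$ at the moment Phase~2 runs (before Phases~3 and~4) with the Phase-4 clusters, since Phase~3 augmentations do not shorten the high-diameter spine. Once these points are nailed down the contradiction is immediate, so the observation follows.
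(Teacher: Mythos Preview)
Your proposal is correct and follows exactly the paper's approach: the paper's entire proof is the single sentence ``If there is such an edge $e$, it would be grouped into a level-$i$ cluster in Phase~2,'' and you have simply fleshed out the bookkeeping behind that sentence. Your discussion of the subtleties (internality forcing $\ge\ell$ effective diameter on both sides of $C_x$, and the $\mathcal{P}=\mathcal{Q}$ overlap case) is sound, though more detail than the paper deems necessary.
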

\begin{proof}
If there is such an edge $e$, it would be grouped into a level-$i$ cluster in Phase 2. 
\end{proof}

Thus, a level-$i$ spanner edge incident to an internally short cluster can be paid by the level-$i$ cluster that contains the other endpoint of $e$. However, if a short cluster is not internal, we must find a way to pay for its incident spanner edges. Recall after Phase 1, every cluster path of effective diameter at least $\ell$ must have an $\mst$ edge from one of its endpoint $\epsilon$-clusters to a level-$i$ cluster. By construction in Phase 4, if a short cluster is an affix of $\mathcal{P}$, called a \emph{short affix cluster},  the other affix of $\mathcal{P}$, called the \emph{sibling affix}, must have an $\mst$ edge to a cluster originating in the first two phases and thus augments it. (The only exception is when there is no level-$i$ clusters after the first two phases and we will handle this case at the end of this paper.) Thus, we can use the credit of $\epsilon$-clusters of the sibling affices to pay for incident spanner edges of affix short clusters. To that end, we analyze clusters originally constructed in the first two phases.

\subsubsection{Clusters originating in Phase 1+2}

Let $C$ be a level-$i$ cluster constructed in Phase 1 or 2. Let $C'$ and $C''$ be the augmentations of $C$ in Phase  3 and 4, respectively. Let $D$ be the diameter path of the spanner given by edges and vertices in $C''$. Let $\mathcal{D}$ be the walk obtained from $D$ by contracting each maximal subpath of $D$ that is inside an $\epsilon$-cluster of $C''$ to a single vertex. 

\begin{definition}[Canonical pair] Let $\mathcal{S} \subseteq C \cup \mathcal{D}$ be a subset of $\epsilon$-clusters of $C''$ such that $|\mathcal S| \leq \frac{2g}{\epsilon}$ and the credits of $\epsilon$-clusters in $\mathcal{S}$ and $\mst$ edges of $C''$ are sufficient to maintain invariant DC1 for $C''$. Let $Y$ be an $\epsilon$-cluster of $C$ that is not in $\mathcal S$. We call $(\mathcal{S},Y)$ a \emph{canonical pair} of $C''$.
\end{definition}

Note that we do not claim the existence of canonical pairs. Indeed, the main goal of this subsection is to prove that a canonical pair exists for $C''$ since its existence implies that $C''\setminus \mathcal{S} \not= \emptyset$. Thus, we can use credits of $\epsilon$-clusters in $C''\setminus \mathcal{S}$ to pay for level-$i$ spanner edges incident to $\epsilon$-clusters of $C''$ and $\epsilon$-clusters of short affix clusters hat have sibling affices in $C''$.

\begin{claim}\label{clm:RSY-pay-spanner-edge}
If $C''$ has a canonical pair $(\mathcal{S}, Y)$, we then can pay for every level-$i$ spanner edge that is incident to $\epsilon$-clusters of $C''$ and $\epsilon$-clusters of short affix clusters that have sibling affices in $C''$ using credits of $\epsilon$-clusters in $C''\setminus \mathcal{S}$ when $\ce = \epsilon^{-\Theta(d)}$.
\end{claim}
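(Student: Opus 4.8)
The plan is a surplus‑credit argument resting on the packing bound $\dk = \ed$ for the cluster graph $\mathcal{K}$. Write $\ell = \ell_i$. I begin with the accounting facts: every level‑$i$ spanner edge has weight in $(\ell/2,\ell]$, so paying for it costs at most $\ell$; by Lemma~\ref{lm:K-structure} every $\epsilon$‑cluster is incident to at most $\dk = \ed$ level‑$i$ spanner edges; by invariant DC1 for level $i-1$ together with $\ell_{i-1}=\epsilon\ell$, every $\epsilon$‑cluster carries at least $\ce\epsilon\ell/2$ credits; a short cluster has at most $\tfrac{2g}{\epsilon}$ $\epsilon$‑clusters and $|\mathcal{S}|\le\tfrac{2g}{\epsilon}$ by the definition of a canonical pair. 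Finally, since $(\mathcal{S},Y)$ is a canonical pair we have $Y\in C\subseteq C''$ with $Y\notin\mathcal{S}$, so $N:=|C''\setminus\mathcal{S}|\ge 1$; this is the only role $Y$ plays.

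The first step is to bound the number of short affix clusters attached to $C''$. Let $\sigma_1,\dots,\sigma_t$ be the short affix clusters whose sibling affix lies in $C''$, with corresponding sibling affices $\alpha_1,\dots,\alpha_t\subseteq C''$. As recalled just before the claim, each $\sigma_k$ is an affix of a Phase‑4 cluster path $\mathcal{P}_k$ one of whose affices is absorbed, in Phase 4, into a level‑$i$ cluster originating in Phases 1--2 — here that cluster is $C''$, so $\alpha_k$ is the absorbed affix, the other affix of $\mathcal{P}_k$. Since each cluster path contributes at most one non‑absorbed affix, distinct $\sigma_k$ arise from distinct cluster paths; as cluster paths are vertex‑disjoint, the $\alpha_k$ are pairwise‑disjoint sets of $\epsilon$‑clusters, and each $\alpha_k$ contains at least one $\epsilon$‑cluster since its effective diameter is at least $\ell>0$. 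Charge $\sigma_k$ to an $\epsilon$‑cluster of $\alpha_k$ lying in $C''\setminus\mathcal{S}$ if there is one, and otherwise (then $\alpha_k\subseteq\mathcal{S}$) to an $\epsilon$‑cluster of $\mathcal{S}$; by disjointness of the $\alpha_k$ no cluster is charged more than once, so $t\le|C''\setminus\mathcal{S}|+|\mathcal{S}|\le N+\tfrac{2g}{\epsilon}$.

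The second step is the payment. Each $\epsilon$‑cluster $z\in C''\setminus\mathcal{S}$ pays for the at most $\dk$ level‑$i$ spanner edges incident to it, at cost at most $\dk\ell$, retaining surplus at least $\ce\epsilon\ell/2-\dk\ell$. Every level‑$i$ spanner edge we still owe — one incident to an $\epsilon$‑cluster of $C''$ or of some $\sigma_k$, but to no $\epsilon$‑cluster of $C''\setminus\mathcal{S}$ — is incident to an $\epsilon$‑cluster of $\mathcal{S}$ or of some $\sigma_k$, so the number of such edges is at most $\dk|\mathcal{S}|+\sum_{k=1}^{t}\dk|\sigma_k|\le(1+t)\dk\tfrac{2g}{\epsilon}$, of total cost at most $(1+t)\dk\tfrac{2g}{\epsilon}\ell\le\bigl(1+N+\tfrac{2g}{\epsilon}\bigr)\dk\tfrac{2g}{\epsilon}\ell$. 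Comparing with the total surplus $N\bigl(\ce\epsilon\ell/2-\dk\ell\bigr)$ and using $N\ge1$, it is enough that $\ce\epsilon/2\ge\dk\bigl(1+\tfrac{4g}{\epsilon}+\tfrac{4g^2}{\epsilon^2}\bigr)$; since $\dk=\ed$ and $g=33$, this holds with $\ce=\epsilon^{-\Theta(d)}$, a choice that also makes every per‑cluster surplus positive, completing the proof.

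The main obstacle is the counting step: one must be sure that every short affix cluster in question can be charged injectively into an $\epsilon$‑cluster of its sibling affix. This uses that the sibling affices are nonempty and pairwise disjoint, which follows from the Phase‑4 cluster paths being vertex‑disjoint and from Phase 1 (as recalled) forcing an affix of each relevant path to be absorbed, so that the map $\sigma_k\mapsto\alpha_k$ is well‑defined with disjoint images. The rest is bookkeeping: verifying that the scheme pays for every owed edge (an edge incident to $C''$ or some $\sigma_k$ either touches $C''\setminus\mathcal{S}$, where it is paid by that endpoint, or else is incident only to $\mathcal{S}$ and to $\sigma_k$'s, where it is charged to the surplus) and that no edge is counted so as to overstate the surplus.
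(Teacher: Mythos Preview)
Your argument is correct, and it takes a genuinely different (and somewhat simpler) route than the paper's own proof. The paper exploits the diameter walk $\mathcal{D}$ of $C''$: since $C''$ is obtained from $C'$ by attaching cluster paths via $\mst$ edges, the walk $\mathcal{D}$ can enter at most two of the attached sibling affices, so only at most two short affix clusters are ``problematic''. The paper then forms a set $\mathcal{R}\supseteq\mathcal{S}\cup\{Y\}$ containing those two short clusters, pays all edges incident to $\mathcal{R}$ with $Y$'s credit, and for every remaining short affix cluster uses a local half--credit scheme in which the sibling affix (of effective diameter $\ge\ell$) donates half its credit. Your approach sidesteps the structural fact about $\mathcal{D}$ entirely: you observe that the sibling affices are pairwise disjoint nonempty subsets of $C''$, so the number $t$ of short affix clusters is at most $|C''|=N+|\mathcal{S}|\le N+\tfrac{2g}{\epsilon}$, and then a single global surplus inequality (with the per--cluster surplus $\ce\epsilon\ell/2-\dk\ell$ summed over $N$ clusters) closes the books. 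What you buy is a shorter argument that uses only the size bound $|\mathcal{S}|\le\tfrac{2g}{\epsilon}$ and the existence of $Y$ (to force $N\ge1$); what the paper's approach buys is a more localized payment scheme that makes explicit which sibling pays for which short cluster, which may be conceptually clearer. Both yield the same $\ce=\epsilon^{-\Theta(d)}$.
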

\begin{proof}
 Let $\mathcal{R}$ be a set of $\epsilon$-clusters that contains every $\epsilon$-cluster in $\mathcal{S} \cup \{Y\}$ and  affix short clusters in Phase 4 whose sibling affices contain an $\epsilon$-cluster of $\mathcal{D}$. Recall that $C'$ is augmented by attaching paths of $\epsilon$-clusters via $\mst$ edges. Thus, $\mathcal{R}$ contains at most two short clusters  as a result of Phase 4 (see Figure~\ref{fig:canonical-pair}).  Since $|\mathcal{S}| \leq \frac{2g}{\epsilon}$ and each short cluster has at most $\frac{2g}{\epsilon}$ $\epsilon$-clusters, $|\mathcal{R}| = O(\frac{g}{\epsilon})$. Since each $\epsilon$-cluster is incident to at most $\dk$ level-$i$ spanner edges by Lemma~\ref{lm:K-structure}, $\epsilon$-clusters in $\mathcal{R}$ are incident to at most $O(\frac{g\dk}{\epsilon})$  level-$i$ spanner edges. Recall that each level-$i$ spanner edge has length at most $\ell$. By invariant DC1 for level $i-1$, $Y$ has at least $\ce \epsilon\ell/2$ credits. Thus, by choosing $\ce = \Theta(\frac{g\dk}{\epsilon^2}) = \epsilon^{-\Theta(d)}$, $Y$'s credit is sufficient to pay for every spanner edge incident to $\epsilon$-clusters in $\mathcal{R}$.

\begin{figure}
\centering
\includegraphics[scale = 1.5]{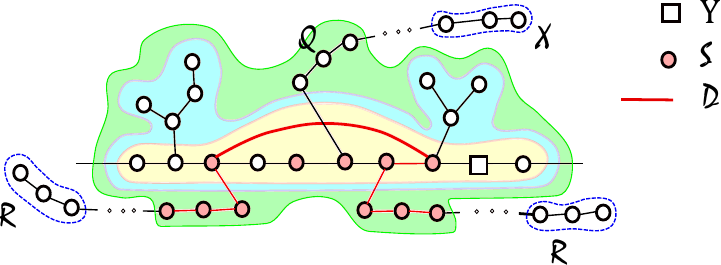}
\caption{ Clusters $C, C'$ and $C''$ are enclosed  by yellow-shaded, cyan-shaded and green-shaded regions, respectively. The red path is the cluster walk $\mathcal{D}$. Shaded $\epsilon$-clusters are in $\mathcal{S}$ and $Y$ is the square $\epsilon$-cluster. Short affix clusters in Phase 4 are enclosed by dotted blue curves. $\mathcal{R}$ contains $\mathcal{S}\cup \{Y\}$ and two (annotated) short affix clusters that have two corresponding sibling affices in $\mathcal{D}$.}
\label{fig:canonical-pair}
\end{figure}
For each $\epsilon$-cluster  $z$ in $C''\setminus \mathcal{R}$, the credit of $z$ is sufficient to pay for incident level-$i$ spanner edges incident to $z$. However, we also need to pay for short affix clusters in Phase 4, whose siblings augment $C'$ in Phase 4. To afford this, we use half the credit of each $\epsilon$-cluster in $C\setminus \mathcal{R}$ (of value at least $\ce \epsilon\ell/4$ by invariant DC1 for level $i-1$) to pay for level-$i$ spanner edges incident to it. Since each $\epsilon$-cluster is incident to at most  $\dk$ level-$i$ spanner edges by Lemma~\ref{lm:K-structure}, this credit is sufficient when $\ce \geq \frac{4\dk}{\epsilon} = \ed$.

Since $C'$ is augmented by attaching cluster paths via $\mst$ edges, an affix cluster not in $\mathcal{R}$ has its sibling in a subset of $C''\setminus \mathcal{R}$. For each short affix cluster $\mathcal{X}$ (see Figure~\ref{fig:canonical-pair}) whose sibling affix, say $\mathcal{Q}$, is in $C''$, we use the remaining half of the credits of the $\epsilon$-clusters of $\mathcal{Q}$ to pay for the level-$i$ spanner edges incident to $\mathcal{X}$. By Lemma~\ref{lm:K-structure}, $X$ is incident to at most $\frac{2g}{\epsilon}\dk$ level-$i$ spanner edges. Since $\edm(\mathcal{Q}) \geq \ell$, $cr(\mathcal{Q}) \geq \ce\ell$ by invariant DC1 for level $i-1$. Thus, half the credit of $\mathcal{Q}$ is sufficient when $\ce \geq \frac{2g}{\epsilon}\dk = \epsilon^{-O(d)}$. 
\end{proof}

By Claim~\ref{clm:RSY-pay-spanner-edge}, it remains to show that $C''$ has a canonical pair $(\mathcal{S}, Y)$. Let $\mathcal{X}$ be a set of $\epsilon$-clusters. We define a subset of $\mathcal{X}$ as follows:
\[
 \trunc{\mathcal{X}}{\rfrac{2g}{\epsilon}} = 
  \begin{cases} 
   \mathcal{X} & \text{if } |\mathcal{X}| \leq \rfrac{2g}{\epsilon}\\
   \text{any subset of }  \rfrac{2g}{\epsilon} \epsilon\text{-clusters of } \mathcal{X}      & \text{otherwise }
  \end{cases}
\]

\begin{claim} \label{clm:DC1-P1} If $C$ is constructed in Phase 1, then $C''$ has a canonical pair.
\end{claim}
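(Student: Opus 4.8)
The plan is to produce a canonical pair $(\mathcal{S},Y)$ for $C''$ explicitly, reading $Y$ off the branching structure that $C$ inherits from Phase~1 and reading $\mathcal{S}$ off either the diameter path of $C''$ or the effective-diameter path of $C$, according to a dichotomy on $\dm(C'')$. First I would record the structural facts we get for free. By Phase~1 Step~1, $C$ is (before any augmentation) a subtree of $\mathcal{T}$ of effective diameter in $[\ell,2\ell]$ that contains a branching $\epsilon$-cluster $X$, so $X$ is incident to at least three $\mst$ edges of $C$; and every later modification of $C$ --- the Phase~1 Step~2 augmentations and the augmentations in Phases~3 and 4 --- attaches a tree or path of $\epsilon$-clusters to $C$ through a single $\mst$ edge. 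Hence the $\epsilon$-clusters of $C''$ form a tree in which $X$ still has degree at least $3$, the attached pieces hang off $C$ through cut edges (so $d_{C''}$ agrees with $d_{C}$ on $V(C)$, whence $\dm(C'')\ge\dm(C)$ and no $\epsilon$-cluster of $C$ is lost), and $C$ has effective diameter at least $\ell$; also $\dm(C'')\le g\ell$ by Claim~\ref{clm:dm-Cp}. Since the diameter path $D$ of $C''$ is a simple path in a tree of $\epsilon$-clusters, its contraction $\mathcal{D}$ is a simple path, so $\mathcal{D}$ uses at most two of the $\mst$ edges incident to $X$, of which there are at least three; I would then let $Y$ be a neighbour of $X$ in $C$ whose $\mst$ edge is not on $\mathcal{D}$, so in particular $Y\in C$ and $Y\notin\mathcal{D}$.

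Next I would split on whether $\dm(C'')\ge\ell/2$. If so, take $\mathcal{S}=\trunc{\mathcal{D}}{\rfrac{2g}{\epsilon}}$: when $|\mathcal{D}|\ge\frac{2g}{\epsilon}$, invariant DC1 for level $i-1$ gives each $\epsilon$-cluster at least $\ce\ell_{i-1}/2=\ce\epsilon\ell/2$ credit, so $cr(\mathcal{S})\ge g\ce\ell\ge\ce\dm(C'')$; when $|\mathcal{D}|<\frac{2g}{\epsilon}$ we have $\mathcal{S}=\mathcal{D}$, and since a maximal subpath of the geodesic $D$ inside an $\epsilon$-cluster $B$ is a geodesic of $B$ (distances in $B$ are preserved in $C''$), $\dm(C'')\le\sum_{B\in\mathcal{D}}\dm(B)+\sum_{e\in\mst(C'')}w(e)$, so $cr(\mathcal{S})+cr(\mst(C''))\ge\ce\dm(C'')$ using DC1 for level $i-1$ together with the fact that every $\mst$ edge of $S'$ carries $\ce w_0$ credit while having weight at most $w_0$. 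If instead $\dm(C'')<\ell/2$, take $\mathcal{S}=\trunc{\mathcal{P}_C\setminus\{Y\}}{\rfrac{2g}{\epsilon}}$, where $\mathcal{P}_C$ is the effective-diameter path of $C$: if this truncation is proper then $\mathcal{S}$ has $\frac{2g}{\epsilon}$ clusters and $cr(\mathcal{S})\ge g\ce\ell\ge\ce\ell/2$; otherwise $cr(\mathcal{S})\ge\ce\sum_{B\in\mathcal{P}_C\setminus\{Y\}}\dm(B)\ge\ce(\edm(\mathcal{P}_C)-\dm(Y))\ge\ce(\ell-g\ell_{i-1})\ge\ce\ell/2$, using $\edm(\mathcal{P}_C)\ge\ell$ and the DC2 bound $\dm(Y)\le g\ell_{i-1}$ for level $i-1$ (valid once $\epsilon\le\frac1{2g}$). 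In every case $\mathcal{S}\subseteq C\cup\mathcal{D}$, $|\mathcal{S}|\le\frac{2g}{\epsilon}$, $Y\in C\setminus\mathcal{S}$, and the collected credit is at least $\ce\max\{\dm(C''),\ell/2\}$, so $(\mathcal{S},Y)$ is canonical.

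The step I expect to be the main obstacle is matching the DC1 floor $\ce\ell/2$ against the size budget $|\mathcal{S}|\le\frac{2g}{\epsilon}$ in the regime of small $\dm(C'')$: the $\epsilon$-clusters lying on $\mathcal{D}$ need not by themselves carry $\ce\ell/2$ credit, which is exactly why one must fall back on the $\epsilon$-clusters that realize $\edm(C)\ge\ell$ --- charging the DC1 floor $\ce\ell_{i-1}/2$ per cluster when there are more than $\frac{2g}{\epsilon}$ of them, and charging the DC1 term $\ce\cdot\dm(B)$ (with the DC2 upper bound on $\dm(Y)$ absorbing the removed cluster $Y$) when there are at most $\frac{2g}{\epsilon}$. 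The remaining points --- that $X$ is branching throughout, that $\mathcal{D}$ is simple, that $Y$ exists and avoids the chosen $\mathcal{S}$, and that $\epsilon$ is small enough for $\ell-g\ell_{i-1}\ge\ell/2$ --- are routine and follow from the construction and the standing smallness assumption on $\epsilon$.
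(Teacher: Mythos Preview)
Your proposal is correct and follows essentially the same route as the paper: pick $Y$ as a neighbour in $C$ of the branching $\epsilon$-cluster $X$ lying off the (simple) diameter walk $\mathcal{D}$, and set $\mathcal{S}=\trunc{\mathcal{D}}{\rfrac{2g}{\epsilon}}$. The one difference is that the paper stops there---in the subcase $|\mathcal{D}|<\frac{2g}{\epsilon}$ it only checks that the credits of $\mathcal{S}$ and of the $\mst$ edges cover $\ce\cdot\dm(C'')$, without separately verifying the floor $\ce\ell/2$---whereas you add the dichotomy on $\dm(C'')$ and in the small-diameter branch fall back on the effective-diameter path $\mathcal{P}_C$ of $C$ (using $\edm(\mathcal{P}_C)\ge\ell$ and the DC2 bound on $\dm(Y)$) to secure the floor. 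So what you single out as the ``main obstacle'' is precisely the point the paper glosses over; your treatment is a genuine tightening rather than a detour, and the rest of your argument matches the paper's.
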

\begin{proof}
Recall $C$ is a tree of $\epsilon$-clusters. That implies $C''$ is also a tree of $\epsilon$-clusters that are connected by $\mst$  edges. Thus, $\mathcal{D}$ is a simple path.  Since $C$ contains a branching $\epsilon$-cluster $X$, there must be at least one neighbor $\epsilon$-cluster of $X$ that is not in $\mathcal{D}$. Let $Y$ be an arbitrary neighbor  $\epsilon$-cluster in $C$ of $X$ and  $\mathcal{S} = \trunc{\mathcal{D}}{\rfrac{2g}{\epsilon}}$.  By definition, $|\mathcal{S}| \leq \frac{2g}{\epsilon}$. 

It remains to show that credits of $\epsilon$-clusters of $\mathcal{S}$ and $\mst$ edges of $D$ is  sufficient to guarantee invariant DC1 for $C''$. Suppose that $\mathcal{S}$ contains at least $\frac{2g}{\epsilon}$ $\epsilon$-clusters. By invariant DC1 for level $i-1$, $cr(S) \geq \ce g\ell \geq \ce \max(\dm(C''),\ell/2)$ which is enough to maintain invariant DC1. Thus, we can assume that $\mathcal{S}$ contains every $\epsilon$-cluster of $\mathcal{D}$. Since $\mathcal{D}$ consists of $\epsilon$-clusters and $\mst$ edges only, we have:
\begin{equation*}
\dm(\mathcal{D}) \leq \sum_{X_\epsilon \in \mathcal{D}}\dm(X_\epsilon) + \sum_{e\in \mst(\mathcal{D})} w(e)
\end{equation*}
Thus, credits of $\epsilon$-clusters of $\mathcal{S}$ and $\mst$ edges of $\mathcal{D}$ are sufficient to maintain DC1. 
\end{proof}
We now consider the case when $C$ is constructed in Phase 2. Recall $C$ consists of four paths $\mathcal{P}_1,\mathcal{P}_2, \mathcal{Q}_1, \mathcal{Q}_2$ that are not necessarily distinct and a single spanner edge $e$ (see Figure~\ref{fig:P2}). 

\begin{claim} \label{clm:DC1-P2-1} If  the four paths $\mathcal{P}_1,\mathcal{P}_2, \mathcal{Q}_1, \mathcal{Q}_2$  are distinct, then $C''$ has a canonical pair.
\end{claim}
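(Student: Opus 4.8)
The plan is to follow the template of Claim~\ref{clm:DC1-P1}, choosing $\mathcal{S}$ to be a $\frac{2g}{\epsilon}$-truncation of the cluster walk $\mathcal{D}$ — possibly padded with a few extra $\epsilon$-clusters of $C$ — and choosing $Y$ to be an $\epsilon$-cluster of $C$ that $\mathcal{D}$ avoids. The first step is a structural observation. By invariant DC2 for level $i-1$, every $\epsilon$-cluster has diameter at most $g\epsilon\ell < \ell$; in particular $\dm(C_x),\dm(C_y) \le g\epsilon\ell$, so, since $\edm(\mathcal{P}_1)\ge\ell$ and similarly for $\mathcal{P}_2,\mathcal{Q}_1,\mathcal{Q}_2$, each of the four paths contains at least two $\epsilon$-clusters. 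Moreover, since the four paths are distinct and case~(c) is excluded, when they lie in two different cluster paths — or in one cluster path with $\mathcal{P}_2,\mathcal{Q}_2$ also disjoint — the subgraph $C$ (hence $C''$) is a tree of $\epsilon$-clusters, so $\mathcal{D}$ is a \emph{simple path} of $\epsilon$-clusters and in this tree $C_y\notin\mathcal{P}_1\cup\mathcal{P}_2$; consequently $C_x$ has at least three distinct $\epsilon$-cluster neighbours inside $C$ (its two $\mathcal{P}$-neighbours, on opposite sides of $C_x$, and $C_y$ via $e$), and symmetrically for $C_y$. I would carry out the argument in this tree case and then indicate the modification needed for the remaining cyclic sub-case.

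If the diameter path $D$ of $C''$ does not traverse $e$, then every edge of $D$ between consecutive $\epsilon$-clusters is an $\mst$ edge, so $\dm(C'')\le\sum_{X_\epsilon\in\mathcal{D}}\dm(X_\epsilon)+\sum_{e'\in\mst(\mathcal{D})}w(e')$, and the verification of DC1 for $C''$ is verbatim that of Claim~\ref{clm:DC1-P1}: take $\mathcal{S}=\trunc{\mathcal{D}}{\rfrac{2g}{\epsilon}}$ (using $\dm(C'')\le g\ell$ from Claim~\ref{clm:dm-Cp} when $|\mathcal{D}|\ge\frac{2g}{\epsilon}$, and the displayed inequality otherwise, together with DC1 for level $i-1$), and take $Y$ to be an $\epsilon$-cluster of $C$ lying in a subtree of $C_x$ that the simple path $\mathcal{D}$ misses — one exists because $C_x$ has at least three subtrees containing $C$-clusters while $\mathcal{D}$ enters at most two.

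The hard part, and what I expect to be the main obstacle, is when $D$ does traverse $e$, say $D=D_1\cdot e\cdot D_2$ with $D_1$ ending in $C_x$: now the same accounting leaves $cr(\mathcal{D})+cr(\mst(C''))$ short of $\ce\dm(C'')$ by exactly $\ce\, w(e)\le\ce\ell$, and this deficit must be paid from $\epsilon$-clusters of $C$ that lie off $\mathcal{D}$. Here the structure pays off: since $e$ leaves $C_x$ toward $C_y$ and the simple path $D_1$ enters $C_x$ from a single subtree, $\mathcal{D}$ misses at least one of the two $\mathcal{P}$-subtrees of $C_x$, w.l.o.g.\ the one holding $\mathcal{P}_1\setminus\{C_x\}$, and symmetrically misses the subtree holding $\mathcal{Q}_q\setminus\{C_y\}$ for some $q\in\{1,2\}$. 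Set $U=(\mathcal{P}_1\setminus\{C_x\})\cup(\mathcal{Q}_q\setminus\{C_y\})$; this is a set of $\epsilon$-clusters of $C$ lying off $\mathcal{D}$, with $|U|\ge 2$ and, using $\dm(C_x),\dm(C_y)\le g\epsilon\ell$ and DC1 for level $i-1$, with $cr(U)\ge 2\ce(\ell-g\epsilon\ell)$. Since every $\epsilon$-cluster carries credit at most $\ce g\epsilon\ell$ (DC2 for level $i-1$), deleting any single $Y\in U$ still leaves $cr(U\setminus\{Y\})\ge\ce(2\ell-3g\epsilon\ell)\ge\ce\, w(e)$ once $\epsilon$ is small enough, so I would pick such a $Y$ and a subset $\mathcal{S}_0\subseteq U\setminus\{Y\}$ with $cr(\mathcal{S}_0)\ge\ce\, w(e)$ and $|\mathcal{S}_0|\le\frac{2}{\epsilon}$ (all of $U\setminus\{Y\}$ when it is small, and otherwise any $\frac{2}{\epsilon}$ of its clusters, which carry $\ce\ell$). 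Then $\mathcal{S}=\mathcal{D}\cup\mathcal{S}_0$ satisfies $\mathcal{S}\subseteq C\cup\mathcal{D}$, $Y\in C\setminus\mathcal{S}$, and $cr(\mathcal{S})+cr(\mst(C''))\ge\bigl(\ce\dm(C'')-\ce\, w(e)\bigr)+\ce\, w(e)=\ce\dm(C'')$, so invariant DC1 for $C''$ holds.

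The one point left to check is the size bound $|\mathcal{S}|\le\frac{2g}{\epsilon}$, and for this I would use the sharper estimate $\dm(C'')\le 31\ell$ for a Phase-2 cluster, which drops out of the proof of Claim~\ref{clm:dm-Cp} once one uses $w_0\le\ell/2$: if $|\mathcal{D}|\ge\frac{62}{\epsilon}$ then $cr\bigl(\trunc{\mathcal{D}}{\rfrac{2g}{\epsilon}}\bigr)\ge 31\ce\ell\ge\ce\dm(C'')$, so no padding is needed and $\mathcal{S}=\trunc{\mathcal{D}}{\rfrac{2g}{\epsilon}}$ already works; while if $|\mathcal{D}|<\frac{62}{\epsilon}$ then $|\mathcal{S}|\le|\mathcal{D}|+\frac{2}{\epsilon}<\frac{64}{\epsilon}<\frac{2g}{\epsilon}$ since $g=33$. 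Finally, the cyclic sub-case — both endpoints of $e$ in a single cluster path with $\mathcal{P}_2\cap\mathcal{Q}_2\ne\emptyset$, which creates a single cycle of length $O(\ell)$ through $e$, including the degenerate situation where $C_y$ is both an $\mst$-neighbour and the spanner neighbour of $C_x$ — is handled by the same bookkeeping after reworking the ``missed subtree'' argument for the walk $\mathcal{D}$ and the short cycle; I expect this sub-case, together with the precise tracking of which subtrees $\mathcal{D}$ enters at $C_x$ and $C_y$, to be the fiddliest part of the write-up.
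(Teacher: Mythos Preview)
Your approach is essentially the paper's: exploit that when the four paths are distinct, $C$ is a \emph{tree} of $\epsilon$-clusters (so $\mathcal{D}$ is simple), and harvest both $Y$ and the extra credit for $e$ from the branches of $C$ that $\mathcal{D}$ misses. Two issues are worth flagging.

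First, the step ``every $\epsilon$-cluster carries credit at most $\ce g\epsilon\ell$ (DC2 for level $i-1$)'' is not justified: DC2 upper-bounds \emph{diameter}, while DC1 only \emph{lower}-bounds credit; nothing in the invariants caps $cr(Y)$. Your conclusion $cr(U\setminus\{Y\})\ge\ce(2\ell-3g\epsilon\ell)$ is still true, but the correct derivation applies DC1 directly to the clusters remaining in $U\setminus\{Y\}$: their diameters sum to at least $\edm(\mathcal{P}_1\setminus\{C_x\})+\edm(\mathcal{Q}_q\setminus\{C_y\})-\dm(Y)\ge 2(\ell-g\epsilon\ell)-g\epsilon\ell$, using DC2 only for $\dm(Y)$. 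The paper avoids this subtraction entirely by keeping the two missed paths separate: it observes that $\mathcal{D}$ meets at most two of the four paths, calls the other two $\mathcal{P}''$, $\mathcal{Q}''$, assigns the credit of the full path $\mathcal{P}''$ (effective diameter $\ge\ell\ge w(e)$) to $e$, and takes $Y$ from $\mathcal{Q}''$. This also makes the size bound on $\mathcal{S}$ immediate (just truncate $\mathcal{D}\cup\mathcal{P}'\cup\mathcal{Q}'\cup\mathcal{P}''$), so your sharpening of Claim~\ref{clm:dm-Cp} to $31\ell$ is unnecessary.

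Second, the ``cyclic sub-case'' you set aside does not occur under this claim's hypothesis. When $\mathcal{P}=\mathcal{Q}$ and the four paths are distinct (case~(b) of Figure~\ref{fig:P2}), the labeling is chosen so that $\mathcal{P}_1,\mathcal{Q}_1$ are the inner affixes; their disjointness forces $\mathcal{P}_2,\mathcal{Q}_2$ to lie on opposite ends of the cluster path, so $\mathcal{P}_2\cap\mathcal{Q}_2=\emptyset$. The degenerate scenario where $C_y$ is an $\mst$-neighbour of $C_x$ already forces $\mathcal{P}_1$ and $\mathcal{Q}_1$ to overlap, which is case~(c), handled separately in Claim~\ref{clm:DC1-P2-2}. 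The paper accordingly states flatly that $C$ is acyclic here; you can drop that final paragraph.
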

\begin{proof}
Let $\mathcal{F} = \{\mathcal{P}_1,\mathcal{P}_2, \mathcal{Q}_1, \mathcal{Q}_2\}$. By construction in Phase 2, $C$ is an acyclic graph of $\epsilon$-clusters connected by $\mst$ edges and a single spanner edge $e$. Thus, $\mathcal{D}$ is a simple path. That implies at most two paths, say $\mathcal{P}'$ and $\mathcal{Q}'$, among four paths in $\mathcal{F}$ share $\epsilon$-clusters with $\mathcal{D}$. Let other two paths of $\mathcal{F}$ be $\mathcal{P}''$ and $\mathcal{Q}''$.  Let $Y$ be an arbitrary $\epsilon$-cluster of $\mathcal{Q}''$ and

\begin{equation*}
\mathcal{S} = \trunc{\mathcal{D} \cup \mathcal{P}' \cup  \mathcal{Q}' \cup  \mathcal{P}''}{\rfrac{2g}{\epsilon}}
\end{equation*} 
If $\mathcal{S} = \frac{2g}{\epsilon}$, then $cr(\mathcal{S}) \geq \ce g \ell$ by invariant DC1 for level $i-1$. Hence, credits of $\epsilon$-clusters in  $\mathcal{S}$ are sufficient to maintain DC1 for $C''$ since $\dm(C'')\leq g\ell$ as shown in the previous section. 

Thus, we can assume that $\mathcal{S} < \frac{2g}{\epsilon}$. In this case, $\mathcal{S} = \mathcal{D} \cup \mathcal{P}' \cup  \mathcal{Q}' \cup  \mathcal{P}'' $.  If $\mathcal{D}$ does not contain the spanner edge $e$, using the same argument in Claim~\ref{clm:DC1-P1}, we can show that credits of $\epsilon$-clusters and $\mst$ edges of $\mathcal{D}$ are sufficient to maintain invariant DC1 for $C''$. Otherwise, we assign credits of $\mathcal{P}''$ to $e$. Since $\edm(\mathcal{P}'') \geq \ell \geq w(e)$, by invariant DC1 for level $i-1$, $cr(\mathcal{P}'') \geq \ce \edm(\mathcal{P}'') \geq \ce w(e)$. Thus $e$ is assigned credit of at least $\ce$ times its length. We then use credits of $\epsilon$-clusters and edges of $\mathcal{D}$ to maintain DC1. The rest of the proof is similar to Claim~\ref{clm:DC1-P1}.
\end{proof}

We assume that $\mathcal{P}_1 = \mathcal{Q}_1 = \mathcal{P}_{xy}$.  In this case, $C$ contains a unique cycle, which is $\{e\} \cup   \mathcal{P}_{xy}$. We first prove that $\mathcal{D}$ is a path when $\epsilon$ is sufficiently small.

\begin{claim} \label{clm:D-simple} $\mathcal{D}$ is a path if $\epsilon$ is smaller than $\frac{1}{2g}$. 
\end{claim}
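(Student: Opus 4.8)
Claim~\ref{clm:D-simple} asks us to show that the cluster walk $\mathcal{D}$ is actually a simple path when $\mathcal{P}_1 = \mathcal{Q}_1 = \mathcal{P}_{xy}$, i.e. when the level-$i$ cluster $C$ contains the unique cycle $\{e\}\cup\mathcal{P}_{xy}$. The plan is to argue that the diameter path $D$ of $C''$ cannot traverse the whole cycle, because doing so would be wasteful: going ``the long way around'' the cycle is roughly twice as long as the chord provided by the spanner edge $e$, so a shorter walk between the same endpoints exists, contradicting that $D$ is a \emph{diameter} (hence shortest among longest) path. More precisely, I would first recall the structure: $C$ is $\mathcal{P}_{xy}$ (a cluster path through $\epsilon$-clusters, with effective diameter at least $\ell$ on each of the two affixes ending at $C_x$ and $C_y$) together with the spanner edge $e=xy$ of weight in $(\ell/2,\ell]$, plus the two ``outer'' subpaths $\mathcal{P}_2,\mathcal{Q}_2$ hanging off $C_x$ and $C_y$, and then possible Phase 3/4 augmentations. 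The only way $\mathcal{D}$ fails to be a path is if $D$ uses both the edge $e$ and the entire arc $\mathcal{P}_{xy}$ between $C_x$ and $C_y$.

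The key estimate is a comparison of lengths. Since every $\mst$ edge has weight at most $w_0\le\ell$ and $\epsilon$-clusters have diameter at most $g\epsilon\ell$ by DC2 at level $i-1$, the packing/diameter bookkeeping already done (Observation~\ref{obs:dm-vs-edm} and the Phase~2 analysis in Claim~\ref{clm:dm-Cp}) gives $\dm(\mathcal{P}_{xy})\le \dm(C)\le 17\ell$, while $w(e)>\ell/2$. That alone is not enough, so instead I would use the greedy/spanner property directly: because $e=xy$ is a spanner edge added by the greedy algorithm, at the moment it was added there was no $x$-to-$y$ path of weight at most $(1+s\epsilon)w(e)$ in $S$; in particular the path through $\mathcal{P}_{xy}$ inside $S$ has weight $>(1+s\epsilon)w(e)>w(e)$. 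Conversely, any path realizing the diameter of $C''$ is a shortest path in the spanner $S$ between its two endpoints. So if $D$ were to use both $e$ and all of $\mathcal{P}_{xy}$, we could short-circuit: replace the $\mathcal{P}_{xy}$-portion-plus-$e$ detour (a closed-walk-like excursion from $C_x$ back to $C_y$ and then across $e$) by the single edge $e$, or vice versa, strictly decreasing the length while keeping the same endpoints — contradicting minimality of $D$ as a shortest path. The constant $\frac{1}{2g}$ enters precisely to guarantee $g\epsilon\ell$-sized clusters and $w_0$-sized $\mst$ edges are small enough that $\dm(\mathcal{P}_{xy})$ plus a couple of cluster diameters plus $w(e)$ stays below the length saved, so the short-circuit is genuinely shorter.

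Concretely the steps are: (i) suppose for contradiction $\mathcal{D}$ is not a path, so $D$ contains a repeated $\epsilon$-cluster; by the acyclic-except-for-$\{e\}\cup\mathcal{P}_{xy}$ structure of $C''$ (the Phase 3/4 augmentations only attach trees), the only repetition possible is that $D$ traverses the whole cycle, hence uses both $e$ and the arc $\mathcal{P}_{xy}$; (ii) identify the two endpoints $u,v$ of $D$ and note $D$ is a shortest $u$-$v$ path in $S$; (iii) exhibit an alternative $u$-$v$ walk that follows $D$ except it skips the cycle traversal — using $e$ in place of $\mathcal{P}_{xy}$ (or deleting the redundant loop entirely) — and bound the saving: the arc $\mathcal{P}_{xy}$ has length $>w(e)$ by the spanner property but the detour it forces has length at least (length of $\mathcal{P}_{xy}$) $- w(e) - O(g\epsilon\ell) - O(w_0) > 0$ once $\epsilon<\frac{1}{2g}$; (iv) conclude the alternative walk is strictly shorter, contradicting that $D$ is a diameter path. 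The main obstacle is step (iii): being careful about exactly which portions of $D$ lie on the cycle versus on the hanging subpaths $\mathcal{P}_2,\mathcal{Q}_2$ and the Phase 3/4 trees, so that the ``short-circuit'' is a legitimate $u$-to-$v$ walk in $S$ and the length accounting (cluster diameters $\le g\epsilon\ell$, $\mst$ edges $\le w_0$, $\ell/2<w(e)\le\ell$) nets out strictly negative; getting the threshold to be exactly $\frac{1}{2g}$ requires tracking these lower-order terms rather than discarding them.
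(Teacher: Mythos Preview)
Your overall strategy—assume $\mathcal{D}$ repeats an $\epsilon$-cluster, then exhibit a strictly shorter $u$-$v$ walk contradicting that $D$ is a shortest path—is exactly right, and it is what the paper does. But your execution is tangled, and in one place you are proving the wrong statement.

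The confusion is in step~(iii). You propose to ``use $e$ in place of $\mathcal{P}_{xy}$'' and you invoke the spanner stretch inequality $\dm(\mathcal{P}_{xy})>(1+s\epsilon)w(e)$ to bound the saving by $(\text{length of }\mathcal{P}_{xy})-w(e)-O(g\epsilon\ell)-O(w_0)$. That is the argument for the \emph{next} statement in the paper (Observation~\ref{obs:Pxy-no-in-D}, that $\mathcal{P}_{xy}\not\subseteq\mathcal{D}$), not for this one. Here $D$ is assumed to use \emph{both} $e$ and (most of) $\mathcal{P}_{xy}$, so ``replacing $\mathcal{P}_{xy}$ by $e$'' is not a well-defined shortcut, and your saving formula (with the $-w(e)$ sign) does not correspond to any concrete alternative walk. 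Moreover, that accounting would make the threshold depend on $s$, not cleanly on $g$; your closing remark that ``getting the threshold to be exactly $\tfrac{1}{2g}$ requires tracking these lower-order terms'' is a symptom of having set up the wrong comparison.

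The correct shortcut is the one you relegate to a parenthetical: delete the redundant loop entirely. If $\mathcal{D}$ repeats the $\epsilon$-cluster $X_\epsilon$, let $u,v\in X_\epsilon$ be the vertices where $D$ leaves and re-enters $X_\epsilon$. Since the cluster graph of $C''$ is a tree plus the single edge $e$, and $D$ is simple (hence $\mathcal{D}$ repeats no cluster-edge), the closed sub-trail of $\mathcal{D}$ between the two visits must contain $e$; thus $w(D_{uv})\ge w(e)\ge \ell/2$. Replace $D_{uv}$ by a shortest $u$-to-$v$ path \emph{inside} $X_\epsilon$, of length at most $\dm(X_\epsilon)\le g\epsilon\ell$ by DC2. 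For $\epsilon<\tfrac{1}{2g}$ this is strictly shorter than $\ell/2$, contradicting that $D$ is a shortest path. No stretch property, no $w_0$ terms, and the $\tfrac{1}{2g}$ threshold falls out immediately from $g\epsilon\ell<\ell/2$.
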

\begin{proof} If $\mathcal{D}$ is not simple, it contains a cycle $\mathcal{C}_{xy}$. Let $u$ and $v$ be two vertices of the same $\epsilon$-cluster, say $X_\epsilon$, such that $D$ enters and leaves $\mathcal{C}_{xy}$ at $u$ and $v$, respectively. Then, the subpath $D_{uv}$ between $u$ and $v$ of $D$ must contain edge $e$ of length at least $\ell/2$. However, we can shortcut $D_{uv}$ through $X_\epsilon$ by a path of length at most $\dm(X_\epsilon) \leq g\epsilon \ell$ by DC2. For $\epsilon < \frac{1}{2g}$, the shortcut has length smaller than $w(D_{uv})$, contradicting that $D$ is a shortest path.  
\end{proof}

\begin{observation} \label{obs:Pxy-no-in-D}
   $\mathcal{P}_{xy} \not\subseteq \mathcal{D}$.
\end{observation}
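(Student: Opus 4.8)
The plan is to argue by contradiction, combining the fact that the diameter path $D$ of $C''$ is a shortest path with the greedy property of the chord $e$. Write $d_{\mathcal{P}_{xy}}(\cdot,\cdot)$ for the shortest-path distance in the subgraph of $S'$ underlying $\mathcal{P}_{xy}$, i.e.\ the edges inside the $\epsilon$-clusters of $\mathcal{P}_{xy}$ together with the $\mst$ edges joining them, and let $C_x=X_1,X_2,\dots,X_k=C_y$ be the $\epsilon$-clusters of $\mathcal{P}_{xy}$ in order. Suppose, for contradiction, that $\mathcal{P}_{xy}\subseteq\mathcal{D}$. By Claim~\ref{clm:D-simple} the walk $\mathcal{D}$ is a simple path, and since $\{e\}\cup\mathcal{P}_{xy}$ is a cycle while a simple path contains no cycle, $\mathcal{D}$ contains $\mathcal{P}_{xy}$ as a sub-path but does not use $e$. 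Hence $D$ passes monotonically through $X_1,\dots,X_k$ via the $\mst$ edges joining consecutive clusters; letting $u$ be the vertex of $C_x$ incident to the $X_2$-edge and $v$ the vertex of $C_y$ incident to the $X_{k-1}$-edge, the subpath $D_{uv}$ of $D$ between $u$ and $v$ uses only edges of the subgraph underlying $\mathcal{P}_{xy}$.

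First I would use $e$ as a shortcut. Since $D$ is a shortest path in $C''$ we have $w(D_{uv})=d_{C''}(u,v)$, and routing via $e=xy$ gives $w(D_{uv})\le d_{C''}(u,x)+w(e)+d_{C''}(y,v)\le\dm(C_x)+w(e)+\dm(C_y)$; by invariant DC2 at level $i-1$ (so $\dm(C_x),\dm(C_y)\le g\ell_{i-1}=g\epsilon\ell$) this is at most $w(e)+2g\epsilon\ell$. As $D_{uv}$ is a $u$-to-$v$ walk in the subgraph underlying $\mathcal{P}_{xy}$ and $x\in C_x$, $y\in C_y$, it follows that
\[
d_{\mathcal{P}_{xy}}(x,y) \le \dm(C_x)+w(D_{uv})+\dm(C_y) \le w(e)+4g\epsilon\ell .
\]

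Next I would invoke the greedy property of $e$. Every edge of the subgraph underlying $\mathcal{P}_{xy}$ is either an $\mst$ edge of $S'$, of weight at most $w_0$, or a spanner edge inside some level-$(i-1)$ cluster, of weight at most $\ell_{i-1}=\epsilon\ell$; either way it is strictly lighter than $e$, because $w(e)>\ell/2$. Hence all these edges were already present when the greedy algorithm processed $e$, so a path from $x$ to $y$ of weight $d_{\mathcal{P}_{xy}}(x,y)$ was available at that moment; since $e$ was nevertheless added, $d_{\mathcal{P}_{xy}}(x,y)>(1+s\epsilon)\,w(e)$. Combining the two bounds gives $(1+s\epsilon)w(e)<w(e)+4g\epsilon\ell$, i.e.\ $s\,w(e)<4g\ell$, and since $w(e)>\ell/2$ this forces $s<8g$, contradicting the standing assumption $s\ge 12g+4$. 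Therefore $\mathcal{P}_{xy}\not\subseteq\mathcal{D}$.

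The step I expect to be the actual work is the structural claim in the first paragraph: that $\mathcal{P}_{xy}\subseteq\mathcal{D}$ together with the simplicity of $\mathcal{D}$ genuinely forces $D$ to avoid the chord $e$ and to traverse $X_1,\dots,X_k$ as one monotone stretch, so that $D_{uv}$ stays inside the subgraph underlying $\mathcal{P}_{xy}$. Everything after that is the familiar shortcut-plus-greedy accounting and brings in no new idea; the only other point to keep honest is that the path from $x$ to $y$ inside $\mathcal{P}_{xy}$ may be taken among edges inserted before $e$, which is immediate since all of them have weight at most $\max\{w_0,\ell_{i-1}\}<\ell/2<w(e)$.
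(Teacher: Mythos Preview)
Your approach is the paper's: assume $\mathcal{P}_{xy}\subseteq\mathcal{D}$, shortcut through $e$, and contradict the stretch condition for $e$; the paper compresses this into a three-line cost comparison while you unpack the greedy argument explicitly (and your bound $s<8g$ matches the paper's threshold $s\ge 8g+1$).

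The one soft spot is exactly where you flagged it. In the paper, $\mathcal{P}_{xy}\subseteq\mathcal{D}$ means containment of \emph{$\epsilon$-clusters}, not of edges, so simplicity of $\mathcal{D}$ alone does not force $\mathcal{D}$ to avoid $e$: $\mathcal{D}$ could run $X_a,\dots,X_1$, cross $e$, then run $X_k,\dots,X_{a+1}$, visiting every $X_i$ while omitting the single $\mst$ edge $X_aX_{a+1}$ and without forming a cycle. This residual case is dispatched by the very same shortcutting idea---replace the long detour through $e$ by the $\mst$ edge $X_aX_{a+1}$ of length at most $w_0+2g\epsilon\ell$---so the gap closes immediately; the paper's own proof glosses over this case as well, so you are at the same level of rigor.
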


\begin{proof}
  For otherwise, $\mathcal{D}$ could be shortcut through $e$ at a cost of
  \begin{eqnarray*}
    & \le&  \underbrace{\dm(C_x) + \dm(C_y) + w(e)}_\text{cost of shortcut}  - \underbrace{(\dm(\mathcal{P}_{xy}) - \dm(C_x) - \dm(C_y))}_\text{lower bound on diameter} \\
    & \le&  w(e) +4g\epsilon\ell -(1+s\epsilon)w(e) \quad \text{(by the stretch condition for }e)\\
    & \le& 4g\epsilon\ell -s\epsilon\ell/2 \quad \text{(since }w(e) \geq \ell/2)
  \end{eqnarray*}
 This change in cost is negative for $s \geq 8g +1$. 
\end{proof}

\begin{claim} \label{clm:DC1-P2-2} $C''$ has a canonical pair.
\end{claim}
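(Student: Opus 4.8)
### Proof proposal for Claim~\ref{clm:DC1-P2-2}

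The plan is to handle the remaining case $\mathcal{P}_1=\mathcal{Q}_1=\mathcal{P}_{xy}$ by reducing it, as much as possible, to the structure already exploited in Claims~\ref{clm:DC1-P1} and~\ref{clm:DC1-P2-1}. By Claim~\ref{clm:D-simple} the cluster walk $\mathcal{D}$ is in fact a simple path (for $\epsilon<\frac{1}{2g}$), and by Observation~\ref{obs:Pxy-no-in-D} the cycle $\{e\}\cup\mathcal{P}_{xy}$ is not entirely contained in $\mathcal{D}$; so at least one $\epsilon$-cluster of $\mathcal{P}_{xy}$, and possibly the edge $e$ itself, lies off the diameter path. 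I would first dispose of the easy subcase: if $\mathcal{S}:=\trunc{\mathcal D\cup\mathcal P_2\cup\mathcal Q_2\cup\mathcal P_{xy}}{\rfrac{2g}{\epsilon}}$ already has $\frac{2g}{\epsilon}$ $\epsilon$-clusters, then $cr(\mathcal S)\ge\ce g\ell\ge\ce\max(\dm(C''),\ell/2)$ by DC1 for level $i-1$ together with $\dm(C'')\le g\ell$ from Claim~\ref{clm:dm-Cp}, so $(\mathcal S,Y)$ is canonical for any $\epsilon$-cluster $Y$ of $C$ not in $\mathcal S$ — and such a $Y$ exists because $\mathcal{P}_{xy}\not\subseteq\mathcal D$ guarantees an $\epsilon$-cluster of $C$ outside $\mathcal D$, and more generally the four paths contain more than $\frac{2g}{\epsilon}$ clusters in this subcase so not all of them can be absorbed.

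The substantive subcase is $|\mathcal{S}|<\frac{2g}{\epsilon}$, so $\mathcal S=\mathcal D\cup\mathcal P_2\cup\mathcal Q_2\cup\mathcal P_{xy}$ contains every $\epsilon$-cluster of the cycle-carrying part of $C$. The key structural fact is that since $\mathcal D$ is a simple path and the only cycle in $C$ is $\{e\}\cup\mathcal P_{xy}$, the diameter path $D$ traverses the cycle along exactly one of the two $\mathcal P_{xy}$-sides, or through $e$; in either case, by Observation~\ref{obs:Pxy-no-in-D} there is at least one full $\epsilon$-cluster of $\mathcal P_{xy}$ not used by $\mathcal D$. I take $Y$ to be such a cluster (in $C$, not in $\mathcal D$). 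To show that the credits of the $\epsilon$-clusters of $\mathcal S$ and the $\mst$ edges of $C''$ suffice to maintain DC1 for $C''$, I argue exactly as in Claim~\ref{clm:DC1-P2-1}: walk along $\mathcal D$, and bound $\dm(\mathcal D)\le\sum_{X_\epsilon\in\mathcal D}\dm(X_\epsilon)+\sum_{f\in\mst(\mathcal D)}w(f)+[e\in\mathcal D]\,w(e)$. If $e\notin\mathcal D$, DC1 for level $i-1$ (credit of each $\epsilon$-cluster $\ge\ce\,\dm(\cdot)$, credit of each $\mst$ edge $\ge\ce$ times its length) immediately gives $cr(\mathcal S)\ge\ce\,\dm(\mathcal D)=\ce\,\dm(C'')$, and $cr(\mathcal S)\ge\ce\ell/2$ holds too since $\mathcal S\supseteq\mathcal P_{xy}$ has effective diameter $\ge\ell$ — wait, one must be careful here since in case (c) $\mathcal P_{xy}$ need not itself have effective diameter $\ge\ell$; instead I use that $\mathcal D$ has effective diameter $\ge\ell$ because $C''$ was built to have diameter $\ge\ell$ in Phase 2 (its two affix subpaths each have effective diameter $\ge\ell$), so $cr(\mathcal S)\ge\ce\edm(\mathcal D)\ge\ce\ell>\ce\ell/2$. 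If $e\in\mathcal D$, I peel $e$ off first: assign to $e$ the credit of $\mathcal P_{xy}\setminus\mathcal D$, which by DC1 for level $i-1$ and $\edm(\mathcal P_{xy}\setminus\mathcal D)\ge\dm(Y)\ge w_0$... this is where the bound is tightest, so instead I assign $e$ the credit of the portion of $\mathcal P_2$ (or $\mathcal Q_2$) not on $\mathcal D$, which has effective diameter $\ge\ell\ge w(e)$ by minimality in Phase 2, giving $e$ credit $\ge\ce w(e)$; then the remaining clusters and edges of $\mathcal D$ carry its diameter as before.

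Finally, with $(\mathcal S,Y)$ established as a canonical pair, Claim~\ref{clm:RSY-pay-spanner-edge} finishes the bookkeeping for paying the level-$i$ spanner edges. The main obstacle I anticipate is precisely the case analysis of how the shortest diameter path $D$ interacts with the unique cycle $\{e\}\cup\mathcal P_{xy}$: one must rule out $D$ entering and leaving the cycle twice (handled by Claim~\ref{clm:D-simple}) and must locate a "spare" $\epsilon$-cluster $Y$ together with a spare path segment whose credit can absorb $e$ when $e\in\mathcal D$, all while keeping $|\mathcal S|\le\frac{2g}{\epsilon}$. Once the correct $Y$ and the correct donor segment for $e$ are identified, every inequality is a routine application of DC1 for level $i-1$ and the diameter bound $\dm(C'')\le g\ell$, with the credit constant $\ce=\epsilon^{-\Theta(d)}$ chosen large enough as in the earlier claims.
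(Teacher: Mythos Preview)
Your overall strategy matches the paper's, but there is a genuine gap in the subcase $e\in\mathcal D$. You propose to assign to $e$ the credit of ``the portion of $\mathcal P_2$ (or $\mathcal Q_2$) not on $\mathcal D$'' and assert this portion has effective diameter $\ge\ell$. That assertion fails precisely when $\mathcal D$ runs through $\mathcal P_2$, into $C_x$, across $e$, through $C_y$, and into $\mathcal Q_2$ --- i.e., when $\mathcal D$ contains no \emph{internal} $\epsilon$-cluster of $\mathcal P_{xy}$. In that configuration both $\mathcal P_2$ and $\mathcal Q_2$ may lie entirely on $\mathcal D$, so their off-$\mathcal D$ portions are empty and cannot donate anything. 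The paper splits $e\in\mathcal D$ into two cases. When $\mathcal D$ does enter the interior of $\mathcal P_{xy}$, simplicity of $\mathcal D$ forces one of $\mathcal P_2,\mathcal Q_2$ to have no internal cluster on $\mathcal D$, and that path donates (this is roughly what you wrote). But when $\mathcal D$ avoids the interior of $\mathcal P_{xy}$, the only available donor is $\mathcal P_{xy}\setminus\{C_x,C_y,Y\}$, and the missing ingredient is the \emph{stretch condition}: because $e$ was added by the greedy algorithm, $\dm(\mathcal P_{xy})\ge(1+s\epsilon)w(e)$, hence
\[
\dm(\mathcal P_{xy}\setminus\{C_x,C_y\})\;\ge\;(1+s\epsilon)w(e)-2g\epsilon\ell\;\ge\;w(e)+g\epsilon\ell
\]
for $s\ge 8g+1$. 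This still leaves at least $\ce\cdot w(e)$ credit after removing one further $\epsilon$-cluster $Y$ of diameter $\le g\epsilon\ell$. You abandoned $\mathcal P_{xy}\setminus\mathcal D$ as a donor because ``the bound is tightest'' there; in fact in this second case it is the only donor, and the spanner stretch is exactly what makes the bound go through.

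A smaller bookkeeping issue: in your substantive subcase you set $\mathcal S=\mathcal D\cup\mathcal P_2\cup\mathcal Q_2\cup\mathcal P_{xy}$ and then pick $Y\in\mathcal P_{xy}\setminus\mathcal D$. But the definition of a canonical pair requires $Y\notin\mathcal S$, so as written $Y$'s credit is counted twice (once in $\mathcal S$ for DC1, once via Claim~\ref{clm:RSY-pay-spanner-edge} to pay spanner edges). The paper fixes this by excluding $Y$ from $\mathcal S$ up front: $\mathcal S=\trunc{\mathcal D\cup\mathcal P_2\cup\mathcal Q_2\cup\mathcal P_{xy}\setminus\{Y\}}{\rfrac{2g}{\epsilon}}$. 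This also cleanly handles the ``easy'' truncated subcase, since $Y$ is chosen in $C$ from the outset.
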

\begin{proof}
Let $Y$ be an $\epsilon$-cluster of $\mathcal{P}_{xy}\setminus \mathcal{D}$. $Y$ exists by Observation~\ref{obs:Pxy-no-in-D}. We define:
\begin{equation*}
\mathcal{S} =  \trunc{\mathcal{D}\cup \mathcal{P}_2 \cup \mathcal{Q}_2 \cup \mathcal{P}_{xy} \setminus \{Y\}}{\rfrac{2g}{\epsilon}}
\end{equation*}
If $|\mathcal{S}| = \frac{2g}{\epsilon}$, then the total credit of $\epsilon$-clusters in $\mathcal{S}$ is at least $\ce g\ell$ by invariant DC1 for level $i-1$. Thus credits of $\epsilon$-clusters in $\mathcal{S}$ is sufficient to maintain invariant DC1 for $C''$. That implies $C''$ has a canonical pair.

Otherwise, $\mathcal{S} = \mathcal{D}\cup \mathcal{P}_2 \cup \mathcal{Q}_2 \cup \mathcal{P}_{xy} \setminus \{Y\}$. If $\mathcal{D}$ does not contain the spanner edge $e$, then by the same argument in Claim~\ref{clm:DC1-P1}, we can argue that credits of $\epsilon$-clusters and $\mst$ edges in $\mathcal{D}$ is enough to maintain invariant DC1 for $C''$. Thus, we can assume that $\mathcal{D}$ contains $e$. We consider two cases:
\begin{enumerate}
\item $\mathcal{D}$ contains an internal $\epsilon$-cluster of $\mathcal{P}_{xy}$. Since $\mathcal{D}$ is a path by Claim~\ref{clm:D-simple}, its does not contain any internal $\epsilon$-cluster of at least one of two paths $\mathcal{P}_2, \mathcal{Q}_2$,~w.l.o.g., say $\mathcal{P}_2$. Since $\edm(\mathcal{P}_2) \geq \ell$, by invariant DC1 for level $i-1$, the total credit of $\epsilon$-clusters in $\mathcal{P}_2$ is at least $\ce \ell$ which is at least $\ce w(e)$. Thus, by assigning credits of $\mathcal{P}_2$ to $e$, every edge of $\mathcal{D}$ has credit at least $c(\epsilon)$ times it length. Thus, credits of $\epsilon$-clusters and edges of $\mathcal{D}$ are enough to maintain DC1 for $C''$.

\item $\mathcal{D}$ does not contain any internal $\epsilon$-cluster of $\mathcal{P}_{xy}$. We have:
\begin{equation} \label{eq:dm-L-prime-case2}
\begin{split}
& \dm(\mathcal{P}_{xy} \setminus \{C_x,C_y\}) \\
& \ge \dm(\mathcal{P}_{xy}) - \dm(C_x)-\dm(C_y) \\
& \ge (1+ s\epsilon)w(e) -  \dm(C_x)-\dm(C_y) \ \ \ (\mbox{by the stretch condition}) \\
& \ge w(e) + s\epsilon\ell/2 - 2g\epsilon\ell\ \ \ (\mbox{by bounds on $w(e)$ and DC2}) \\
& \ge w(e) + g\epsilon\ell \ \ \ (\mbox{for $s \ge 8g + 1$, as previously required})
\end{split} 
\end{equation}
The credit of the $\mst$ edges and $\epsilon$-clusters of $\mathcal{P}_{xy} \setminus \{C_x,C_y\}$ is at least:
\begin{equation}
\begin{split}
&\ce \cdot(\mst(\mathcal{P}_{xy} \setminus \{C_x,C_y\}) + \edm(\mathcal{P}_{xy} \setminus \{C_x,C_y\})) \\&\ge \ce\cdot \dm(\mathcal{P}_{xy} \setminus \{C_x,C_y\})\\&\geq \ce (w(e) +g\epsilon\ell)  \qedhere
\end{split}
\end{equation}
Since $\dm(Y) \leq g\epsilon \ell$ by invariant DC2 for level $i-1$, the total credit of $\epsilon$-clusters of $\mathcal{P}_{xy} \setminus \{C_x,C_y,Y\}$ and $\mst$ edges of  $\mathcal{P}_{xy} \setminus \{C_x,C_y\}$ is at least $\ce \cdot w(e)$. Thus, by assigning this credit to $e$, we can argue that credits of $\epsilon$-clusters and edges of $\mathcal{D}$ are enough to maintain DC1 for $C''$.
\end{enumerate}
\end{proof}

\subsubsection{No Phase 1 or 2 clusters}

We now deal with the case when there are no level-$i$ clusters formed in Phase 1 and 2. 

\begin{observation} \label{obs:exception}There is no level-$i$ cluster formed in Phase 1 and 2 if and only if (i) the tree $\mathcal{T}$ of $\epsilon$-clusters is a path and (ii) every spanner edge is incident to an $\epsilon$-cluster in an affix of $\mathcal{T}$ having effective diameter at most $2\ell$.
\end{observation}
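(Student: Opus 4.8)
The plan is to prove Observation~\ref{obs:exception} by characterizing exactly when each of Phases 1 and 2 fails to produce a cluster, and then checking that the conjunction of those failure conditions is equivalent to (i) and (ii). I would argue both directions.

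\textbf{The ``if'' direction.} Suppose $\mathcal{T}$ is a path and every level-$i$ spanner edge is incident to an $\epsilon$-cluster lying in an affix of $\mathcal{T}$ of effective diameter at most $2\ell$. Since $\mathcal{T}$ has no branching $\epsilon$-cluster at all, Step~1 of Phase~1 never finds a subtree $\mathcal{T}'$ containing a branching vertex that remains branching, so Phase~1 forms no cluster (and Step~2, which only augments Phase-1 clusters, does nothing). For Phase~2, recall that it forms a cluster from a spanner edge $e=xy$ only when \emph{both} of the two affix subpaths of $\mathcal{P}$ ending at $C_x$ have effective diameter at least $\ell$, and likewise for $C_y$. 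But condition (ii) says $x$ (resp.\ $y$) lies in an affix of $\mathcal{T}$ of effective diameter at most $2\ell$; I would show this forces at least one of the two affix subpaths ending at $C_x$ to have effective diameter $<\ell$. Concretely, if $C_x$ lies in an affix $\mathcal{A}$ with $\edm(\mathcal{A})\le 2\ell$, then the shorter of the two subpaths of $\mathcal{A}$ split at $C_x$ has effective diameter at most roughly $\ell$ (using that an individual $\epsilon$-cluster has diameter at most $g\epsilon\ell \ll \ell$, so removing one cluster from a path of effective diameter $\le 2\ell$ leaves two pieces one of which has effective diameter below $\ell$); since that subpath is a subpath of $\mathcal{P}$ ending at $C_x$, the Phase-2 precondition fails. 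Hence Phase~2 also forms no cluster.

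\textbf{The ``only if'' direction.} Conversely, suppose Phases 1 and 2 form no cluster. If $\mathcal{T}$ contained a branching $\epsilon$-cluster $X$, then since the base-case construction and Observation~\ref{obs:dm-vs-edm} guarantee we can grow a subtree around $X$ of effective diameter between $\ell$ and $2\ell$ keeping $X$ branching (each step adds an $\epsilon$-cluster of effective diameter $\ge w_0$ and $\le g\epsilon\ell$, so we can hit the window $[\ell,2\ell]$), Step~1 of Phase~1 would form a cluster---contradiction. So $\mathcal{T}$ is a path, giving (i). Now take any level-$i$ spanner edge $e=xy$ and let $\mathcal{P}=\mathcal{T}$ be that path, $C_x,C_y$ the $\epsilon$-clusters containing $x,y$. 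If $x$ did \emph{not} lie in an affix of $\mathcal{T}$ of effective diameter $\le 2\ell$, then in particular both subpaths of $\mathcal{T}$ split at $C_x$ have effective diameter $>\ell$ (indeed, the minimal affix at $C_x$ of effective diameter $\ge \ell$ has effective diameter at most $\ell+g\epsilon\ell \le 2\ell$, so if $x$ is not in such an affix then both sides exceed $\ell$), and similarly at $C_y$ (if $y$ also failed (ii)); but then the Phase-2 precondition is met and a cluster is formed---contradiction. So every spanner edge has an endpoint in an affix of effective diameter $\le 2\ell$, which is (ii).

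\textbf{Main obstacle.} The delicate point is the interface between ``effective diameter at most $2\ell$'' in statement (ii) and the ``minimal subpath of effective diameter at least $\ell$'' used in Phase~2, together with the Phase-1 subtlety that branching status is updated after each recursive extraction. I would handle the first by the elementary fact that deleting a single $\epsilon$-cluster (of diameter $\le g\epsilon\ell$) from a cluster path partitions it into two subpaths whose effective diameters sum to within $g\epsilon\ell$ of the whole, so ``$\edm \le 2\ell$'' on an affix forces one side below $\ell$, while ``not contained in any affix of $\edm\le 2\ell$'' forces both sides above $\ell$; for $\epsilon$ small this dichotomy is clean. The Phase-1 subtlety is benign here because in the no-cluster case $\mathcal{T}$ has no branching vertex to begin with, so no recursive update ever occurs; I would just remark that a path has no branching vertex and conversely any branching vertex survives long enough to be captured by a subtree in the target effective-diameter window.
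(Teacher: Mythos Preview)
The paper states this as an observation without proof, so there is no argument to compare against; I'll simply assess your attempt.

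Your ``only if'' direction is fine and is the direction the paper actually uses afterwards. If $\mathcal{T}$ had a branching $\epsilon$-cluster one could grow a subtree of effective diameter in $[\ell,2\ell]$ around it (each $\epsilon$-cluster contributes at most $g\epsilon\ell$), so Step~1 of Phase~1 would fire; and if some edge $e=xy$ had \emph{neither} endpoint in any affix of effective diameter at most $2\ell$, then both affixes of $\mathcal{T}$ ending at $C_x$ (and at $C_y$) would have effective diameter $>2\ell>\ell$, so Phase~2 would fire. That is exactly your argument.

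Your ``if'' direction, however, has a genuine gap. You claim that if $C_x$ lies in an affix $\mathcal{A}$ with $\edm(\mathcal{A})\le 2\ell$, then one of the two affixes of $\mathcal{P}=\mathcal{T}$ ending at $C_x$ has effective diameter $<\ell$. This is not what your splitting argument shows: splitting $\mathcal{A}$ at $C_x$ gives two pieces, but only the \emph{outer} piece (the one touching the endpoint of $\mathcal{T}$) is an affix of $\mathcal{T}$; the inner piece lies in the interior of $\mathcal{T}$ and is irrelevant to the Phase~2 test. If the outer piece happens to be the longer one, its effective diameter can be anywhere in $[\ell,2\ell]$, and then \emph{both} affixes of $\mathcal{T}$ ending at $C_x$ have effective diameter $\ge\ell$. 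Concretely: take $C_x$ so that the prefix of $\mathcal{T}$ ending at $C_x$ has effective diameter $1.5\ell$ and take $C_y$ deep in the middle; condition~(ii) holds (via that prefix), yet the Phase~2 precondition is satisfied at both endpoints and a cluster is formed. So with the constant $2\ell$ as written, the ``if'' direction is in fact false; the equivalence becomes correct only if one tightens~(ii) to ``effective diameter $<\ell$'' (or, equivalently, ``one of the two affixes ending at $C_x$ has effective diameter $<\ell$''). Since the paper only relies on the ``only if'' direction to bound $|B|$, this imprecision is harmless there, but your proof of the biconditional cannot be completed without amending the statement.
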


By Claim~\ref{clm:DC1-long-cluster}, we only need to pay for spanner edges incident to short affix clusters of $\mathcal{T}$. Since short clusters have at most $\frac{2g}{\epsilon}$ $\epsilon$-clusters, there are at most $\frac{4g\dk}{\epsilon}$ such spanner edges, that we assign to set $B$ (Lemma~\ref{lm:main}). Below, we show that $w(B) \leq \epsilon^{-O(d)}\cdot w(\mst)$ across all levels, implying Lemma~\ref{lm:main}.

\begin{claim} \label{clm:B-weihgt} $w(B) \leq \epsilon^{-O(d)}\cdot w(\mst)$.
\end{claim}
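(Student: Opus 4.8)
The plan is to prove Claim~\ref{clm:B-weihgt} by a simple geometric‑series argument over the levels: the edges dumped into $B$ at level $i$ number at most $\tfrac{4g\dk}{\epsilon}$ and each has weight at most $\ell_i = \tfrac{2^{j+1}}{\epsilon^i}w_0$, and since the $\ell_i$ grow by a factor $1/\epsilon>1$ per level, the total is dominated by the top nonempty level, whose edge weights are in turn capped by the diameter bound $nw_0\le 2w(\mst)$.

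Concretely, I would first fix $j$ and let $B_i\subseteq \Pi_i^j$ be the edges added to $B$ at level $i$. By Observation~\ref{obs:exception} and the discussion preceding the claim, $B_i\ne\emptyset$ only when level $i$ is exceptional, in which case $\mathcal{T}$ is a single cluster path, every level‑$i$ spanner edge incident to its $\epsilon$-clusters touches one of the (at most two) affix clusters, and only the \emph{short} such affixes must be charged to $B$ (long affixes are handled by Claim~\ref{clm:DC1-long-cluster} and internal short clusters by Observation~\ref{obs:edge-int-short}). Since a short cluster has at most $\tfrac{2g}{\epsilon}$ $\epsilon$-clusters and each $\epsilon$-cluster is incident to at most $\dk$ level‑$i$ spanner edges (Lemma~\ref{lm:K-structure}), we get $|B_i|\le \tfrac{4g\dk}{\epsilon}$, and each edge of $B_i\subseteq\Pi_i^j$ has weight at most $\ell_i$. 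Next I would let $i^*$ be the largest index with $B_i\ne\emptyset$ (if none exists, $B=\emptyset$ and we are done). Then $\Pi_{i^*}^j\ne\emptyset$, so some level‑$i^*$ spanner edge has weight $>\ell_{i^*}/2$; as every spanner edge has weight at most the longest distance in $S'$, which is at most $nw_0$, this gives $\ell_{i^*}<2nw_0$. Since $\ell_i=\ell_{i^*}\epsilon^{\,i^*-i}$ and $\epsilon\le\tfrac12$,
\[
w(B)=\sum_{i=0}^{i^*}w(B_i)\;\le\;\frac{4g\dk}{\epsilon}\sum_{i=0}^{i^*}\ell_i\;=\;\frac{4g\dk}{\epsilon}\,\ell_{i^*}\sum_{m=0}^{i^*}\epsilon^{m}\;\le\;\frac{8g\dk}{\epsilon}\,\ell_{i^*}\;<\;\frac{16g\dk}{\epsilon}\,nw_0\;\le\;\frac{32g\dk}{\epsilon}\,w(\mst),
\]
using $w_0=\tfrac{w(\mst)}{n-1}$ and hence $nw_0\le 2w(\mst)$ for $n\ge 2$. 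Finally, $g=33$ by Claim~\ref{clm:dm-Cp} and $\dk=\ed$ by Lemma~\ref{lm:K-structure}, so $w(B)\le\ed\cdot w(\mst)$, as claimed.

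The routine part is the summation, which collapses precisely because the per‑level weight scale $\ell_i$ increases geometrically and its top value is pinned by the diameter of $S'$. The step I would be most careful about is the per‑level count $|B_i|\le\tfrac{4g\dk}{\epsilon}$: it rests entirely on the structural characterization of the exceptional case in Observation~\ref{obs:exception} — that $\mathcal{T}$ is one path and that all offending spanner edges sit at its two affixes — so that only two short affix clusters, with $\tfrac{2g}{\epsilon}$ $\epsilon$-clusters each, ever feed into $B$ at a given level; everything else is absorbed by the earlier cases of the DC1 analysis.
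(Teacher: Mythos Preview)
Your proof is correct and follows the same approach as the paper: bound the contribution of each level by $\tfrac{4g\dk}{\epsilon}\ell_i$, sum the geometric series in $\ell_i$, and cap the top level using the fact that no spanner edge exceeds the diameter of $S'$. The only cosmetic difference is that the paper bounds the top term by $\ell_{\max}=\max_{e\in S}w(e)\le w(\mst)$ directly, while you go through $nw_0\le 2w(\mst)$; both yield the same $\ed\cdot w(\mst)$ conclusion.
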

\begin{proof}
We have:
\begin{equation}
\begin{split}
\frac{4g\dk}{\epsilon} \sum_{i} \ell_i &\leq \frac{4g\dk}{\epsilon} \ell_{\max} \sum_{i}\epsilon^i,\ \mbox{where $\ell_{\max} = \max_{e\in S}\{w(e)\}$} \\
    &\leq \frac{4g\dk}{\epsilon} w(\mst)  \sum_{i} \epsilon^i \\
    & \leq \frac{4g\dk}{\epsilon}  w(\mst) \frac{1}{1-\epsilon} =  \ed\cdot w(\mst)
\end{split}
\end{equation}
\end{proof}
 \bibliographystyle{plain}
\bibliography{spanner}

\begin{thebibliography}{10}

\bibitem{ADDJS93}
I.~Alth\"{o}fer, G.~Das, D.~Dobkin, D.~Joseph, and J.~Soares.
\newblock On sparse spanners of weighted graphs.
\newblock {\em Discrete Computational Geometry}, 9(1):81--100, 1993.

\bibitem{ADMSS95}
S.~Arya, G.~Das, D.~M. Mount, J.~S. Salowe, and M.~Smid.
\newblock Euclidean spanners: Short, thin, and lanky.
\newblock In {\em Proceedings of the Twenty-seventh Annual ACM Symposium on
  Theory of Computing}, STOC '95, pages 489--498, 1995.

\bibitem{Assouad83}
P.~Assouad.
\newblock Plongements lipschitziens dans $\mathbb{R}^n$.
\newblock {\em Soci\'{e}t\'{e} math\'{e}matique de France}, 111:429--448, 1983.

\bibitem{BDT14}
G.~Borradaile, E.~D. Demaine, and S.~Tazari.
\newblock Polynomial-time approximation schemes for subset-connectivity
  problems in bounded-genus graphs.
\newblock {\em Algorithmica}, 68(2):287--311, 2014.

\bibitem{BLW17}
G.~Borradaile, H.~Le, and C.~Wulff-Nilsen.
\newblock Minor-free graphs have light spanners.
\newblock In {\em 2017 IEEE 58th Annual Symposium on Foundations of Computer
  Science}, 2017.

\bibitem{CW16}
S.~Chechik and C.~Wulff-Nilsen.
\newblock Near-optimal light spanners.
\newblock In {\em Proceedings of the 27th Annual ACM-SIAM Symposium on Discrete
  Algorithms}, SODA'16, pages 883--892, 2016.

\bibitem{DHN93}
G.~Das, P.~Heffernan, and G.~Narasimhan.
\newblock Optimally sparse spanners in 3-dimensional euclidean space.
\newblock In {\em Proceedings of the 9th Annual Symposium on Computational
  Geometry}, SCG '93, pages 53--62, 1993.

\bibitem{DNS95}
G.~Das, G.~Narasimhan, and J.~Salowe.
\newblock A new way to weigh malnourished euclidean graphs.
\newblock In {\em Proceedings of the 6th Annual ACM-SIAM Symposium on Discrete
  Algorithms}, SODA '95, pages 215--222, 1995.

\bibitem{FS16}
A.~Filtser and S.~Solomon.
\newblock The greedy spanner is existentially optimal.
\newblock In {\em Proceedings of the 2016 ACM Symposium on Principles of
  Distributed Computing}, PODC '16, pages 9--17, 2016.

\bibitem{GGN06}
J.~Gao, L.~J. Guibas, and A.~Nguyen.
\newblock Deformable spanners and applications.
\newblock {\em Computational Geometry}, 35(1):2--19, 2006.

\bibitem{Gottlieb15}
L.~A. Gottlieb.
\newblock A light metric spanner.
\newblock In {\em 2015 IEEE 56th Annual Symposium on Foundations of Computer
  Science}, pages 759--772, 2015.

\bibitem{Heinonen01}
J.~Heinonen.
\newblock {\em Lectures on analysis on metric spaces}.
\newblock Universitext. Springer New York, 2001.

\bibitem{Klein05}
P.~N. Klein.
\newblock A linear-time approximation scheme for planar weighted {TSP}.
\newblock In {\em Proceedings of the 46th Annual IEEE Symposium on Foundations
  of Computer Science}, FOCS '05, pages 647--657, 2005.

\bibitem{LL89}
C.~Levcopoulos and A.~Lingas.
\newblock There are planar graphs almost as good as the complete graphs and as
  short as minimum spanning trees.
\newblock In {\em International Symposium on Optimal Algorithms}, pages 9--13,
  1989.

\bibitem{NS07}
G.~Narasimhan and M.~Smid.
\newblock {\em Geometric Spanner Networks}, chapter Geometric Analysis: The
  Leapfrog Property, pages 257--317.
\newblock Cambridge University Press, 2007.

\bibitem{RS98}
S.~B. Rao and W.~D. Smith.
\newblock Approximating geometrical graphs via ``spanners'' and ``banyans''.
\newblock In {\em Proceedings of the 30th Annual ACM Symposium on Theory of
  Computing}, STOC '98, pages 540--550, 1998.

\bibitem{Smid09}
M.~Smid.
\newblock The weak gap property in metric spaces of bounded doubling dimension.
\newblock In Susanne Albers, Helmut Alt, and Stefan N\"{a}her, editors, {\em
  Efficient Algorithms}, pages 275--289. Springer-Verlag, 2009.

\end{thebibliography}

\appendix
\section{Notation and definitions} \label{sec:prel}

Let $G(V(G),E(G))$ be a connected and undirected graph with a positive edge weight function $w : E(G) \rightarrow \Re^+\setminus \{0\}$. We denote $|V(G)|$ and $|E(G)|$ by $n$ and $m$, respectively. Let $\MST(G)$ be a minimum spanning tree of $G$; when the graph is clear from the context, we simply write $\mst$.  A walk of length $p$ is a sequence of alternating vertices and edges $\{v_0,e_0,v_1,e_1,\ldots, e_{p-1}, v_{p}\}$ such that $e_i = v_iv_{i+1}$ for every $i$ such that $1\leq 0 \leq p-1$. A path is a \emph{simple walk} where every vertex appears exactly once in the walk. 
For two vertices $x,y$ of $G$, we use $d_G(x,y)$ to denote the shortest distance between $x$ and $y$. 

Let $S$ be a subgraph of $G$. We define $w(S) = \sum_{e \in E(S)}w(e)$. Let $X \subseteq V(G)$ be a set of vertices. We use $G[X]$ to denote the subgraph of $G$ induced by $X$. Let $Y\subseteq E(G)$ be a subset of edges of $G$. We denote the graph with vertex set $V(G)$ and edge set $Y$ by  $G[Y]$.

\section{Greedy spanners} \label{app:greed}

A subgraph $S$ of $G$ is a $(1+\epsilon)$-spanner of $G$ if $V(S) = V(G)$ and  $d_S(x,y) \leq (1+\epsilon)d_G(x,y)$ for all $x,y\in V(G)$. The following greedy algorithm by Alth\"ofer et al.~\cite{ADDJS93} finds a $(1+\epsilon)$-spanner of $G$:

\begin{tabbing}
  {\sc GreedySpanner}$(G(V,E), \epsilon)$\\
  \qquad \= $S \leftarrow (V, \emptyset)$.\\
  \> Sort edges of $E$ in non-decreasing order of weights.\\
  \> For each edge $xy \in E$ in sorted order\\
  \> \qquad \=  if $(1+\epsilon)w(xy) < d_S(x,y)$\\
  \>\>\qquad \=  $E(S) \leftarrow E(S) \cup \{e\}$\\
  \qquad \= return $S$
\end{tabbing}

\noindent Observe that as algorithm {\sc GreedySpanner} is a relaxation of  Kruskal's algorithm, $\MST(G) = \MST(S)$. Since we only consider $(1+\epsilon)$-spanners in this work, we simply call an $(1 + \epsilon)$-spanner a \emph{a spanner}. We define the lightness of a spanner $S$ to be the ratio $\frac{w(S)}{w(\MST(G))}$. We call $S$ \emph{light} if its lightness is independent of the number of vertices or edges of $G$.

\section{Omitted Proofs}\label{app:ommitted}
\begin{proof}[Proof of Lemma~\ref{lm:K-structure}]
To show that $\mathcal{K}$ is simple, we use the same argument as Borradaile, Le and Wulff-Nilsen~\cite{BLW17} that we briefly sketch here. Recall $\epsilon$-clusters have diameter at most $g\epsilon \ell$ by the diameter-credit invariants. Recall edges in $E_i$ have weight in range $(\ell/2,\ell]$. Thus, when $\epsilon$ is sufficiently small, $\mathcal{K}$ has no self-loops. To show that $\mathcal{K}$ has no parallel edges, we assume there are such two $xy$ and $uv$ where $w(xy) < w(uv)$. Let $C_u$, $C_v$ be two $\epsilon$-clusters that contain $u$ and $v$, respectively. We further assume,~w.l.o.g, that $x \in C_u, y\in C_v$. Then the $u$-to-$v$ path $P_{uv}$ from $u$ to $x$ inside $C_u$, edge $xy$ and then $y$ to $v$ inside $C_v$ has length at most $w(xy)  +2g\epsilon \ell$, which is at most $(1+4g\epsilon)w(uv)$ since $w(uv) \geq \ell/2$. Thus, by choosing $s \geq 4g$, edge $uv$ is not added to the spanner  by the greedy algorithm. Thus, $\mathcal{K}$ is simple.
\end{proof}

\end{document}